\newcommand{\TSLSignatureBig}{
\centering
\begin{tabular}{|c|c|lr|}
\hline
Signt & \hs{0.1} Sort \hs{0.1} & \phantom{Functions}Functions & Predicates\phantom{Preds} \\
\hline
\hline
\sigLevel &
	$\begin{array}{c}
		\sLevel
	\end{array}$ &
	$\begin{array}{lcl}
		0 & : & \sLevel \\
		s & : & \sLevel \to \sLevel \\
	\end{array}$ &
	$\begin{array}{lcl}
		< : \sLevel \times \sLevel
	\end{array}$\\
\hline
\sigOrd &
	$\begin{array}{c}
		\sOrd
	\end{array}$ &
	$\begin{array}{lcl}
		\fZero, \fInfty & : & \sOrd \\
	\end{array}$ &
	$\begin{array}{lcl}
		\pOrd & : & \sOrd \times \sOrd \\
	\end{array}$ \\
\hline
\sigArray &
	$\begin{array}{c}
		\sArray \\ \sLevel \\ \sAddr \\
	\end{array}$ &
	$\begin{array}{lcl}
		\fArrayRd{\_}{\_} &:& \sArray \times \sLevel \to \sAddr \\
		\fArrayUpd{\_}{\_}{\_} & : &
				\sArray \times \sLevel \times \sAddr \to \sArray \\
	\end{array}$ & \\
\hline
\sigCells &
	$\begin{array}{c}
		\sCell \\ \sElem \\ \sOrd \\ \sArray \\ \sAddr \\ \sLevel
	\end{array}$ &
	$\begin{array}{lcl}
		\fError & : & \sCell \\
		\fMkcell & : & \makebox[4em][l]{$\sElem \times \sOrd \times
				\sArray \times \sLevel \to \sCell$} \\
		\_.\fData & : & \sCell \to \sElem \\
		\_.\fKey & : & \sCell \to \sOrd \\
		\_.\fArr & : & \sCell \to \sArray \\
		\_.\fMax & : & \sCell \to \sLevel \\
	\end{array}$ & \\
\hline
\sigMemory &
	$\begin{array}{c}
		\sMem \\ \sAddr \\ \sCell
	\end{array}$ &
	$\begin{array}{lcl}
		\fNull & : & \sAddr \\
		\fRd & : & \sMem \times \sAddr \to \sCell \\
		\fUpd & : & \sMem \times \sAddr \times \sCell \to \sMem
	\end{array}$ & \\
\hline
\sigReach &
	$\begin{array}{c}
		\sMem \\ \sAddr \\ \sPath
	\end{array}$ &
	$\begin{array}{lcl}
		\fEpsilon & : & \sPath \\
		\lbrack \_ \rbrack & : & \sAddr \to \sPath
	\end{array}$ &
	$\hs{-2.4}\begin{array}{lcl}
		\pAppend & : & \sPath \times \sPath \times \sPath \\
		\pReach & : & \sMem \times \sAddr \times \sAddr \\
				&& \mhs \times \; \sLevel \times \sPath \\
	\end{array}$ \\
\hline
\sigSets &
	$\begin{array}{c}
		\sAddr \\ \sSet
	\end{array}$ &
	$\begin{array}{lcl}
		\emptyset & : & \sSet \\
		\{ \_ \} & : & \sAddr \to \sSet \\
		\cup, \cap, \setminus & : & \sSet \times \sSet \to \sSet
	\end{array}$ &
	$\begin{array}{lcl}
		\in & : & \sAddr \times \sSet \\
		\subseteq & : & \sSet \times \sSet
	\end{array}$ \\
\hline
\sigBridge &
	$\begin{array}{c}
		\sMem \\ \sAddr \\ \sSet \\ \sPath \\ \sLevel
	\end{array}$ &
	$\begin{array}{lcl}
		\fPathToSet & : & \sPath \to \sSet \\
		\fAddrToSet & : & \sMem \times \sAddr \times \sLevel
			\to \sSet \\
		\fGetp & : & \sMem \times \sAddr \times \sAddr \times
			\sLevel \to \sPath \\
	\end{array}$ &
	$\hspace{-2.4em}\begin{array}{lcl}
		\pOrdList & : & \sMem \times \sPath \\
		\pSkiplist & : & \sMem \times \sSet \times \sLevel \\
									&& \mhs \times \; \sAddr \times \sAddr \\
		\phantom{.}
	\end{array}$ \\
\hline
\end{tabular}
}
\newcommand{\TSLInterpretationsLNCS}{
\renewcommand{\labelitemi}{$\bullet$}
\centering
\begin{tabular}{|c|p{10.5cm}|}
\hline
\multicolumn{2}{|l|}{
Each sort $\sigma$ in \sigTSL is mapped to a non-empty set $\Ais{\sigma}$ 
such that:} \\
\multicolumn{2}{|l|}{
\begin{tabular}{clcl}
	(a) &$\Ais{\sAddr}$ and $\Ais{\sElem}$ are discrete sets &
	(b) &$\Ais{\sLevel}$ is the naturals with order \\
	(c) &$\Ais{\sOrd}$ is a total ordered set &
	(d) &$\Ais{\sArray} = \Ais{\sAddr}^{\Ais{\sLevel}}$ \\
	(e) &$\Ais{\sCell} = \Ais{\sElem} \times \Ais{\sOrd} \times
					\Ais{\sArray} \times \Ais{\sLevel}$ &
	(f) &$\Ais{\sPath}$ is the set of all finite sequences of \\
	(g) &$\Ais{\sMem} = \Ais{\sCell}^{\Ais{\sAddr}}$ &
		& \hs{0.5} (pairwise) distinct elements of $\Ais{\sAddr}$ \\
	(h) &$\Ais{\sSet}$ is the power-set of $\Ais{\sAddr}$ \\
\end{tabular}
} \\
\hline
\hline
Signature & Interpretation \\
\hline
\hline
\sigLevel &
	\begin{tabular}{p{10.2cm}}
		\begin{array}{>{\hspace{0.2cm}\bullet\hspace{0.2cm}}l>{\hspace{9em}\bullet\hspace{0.2cm}}l}
			\inter{0}{\Ai} = 0 &
			\inter{s}{\Ai}(l) = s(l) \text{, for each $l \in 
		\Ais{\sLevel}$}\\
		\end{array}
	\end{tabular} \\
\hline
\sigOrd &
	\begin{tabular}{p{10.2cm}}
		\begin{array}{>{\hspace{0.2cm}\bullet\hspace{0.2cm}}l>{\hspace{1.9em}\bullet\hspace{0.2cm}}l}
			x \inter{\pOrd}{\Ai} y \land y \inter{\pOrd}{\Ai} x
				\Impl x = y &
			x \inter{\pOrd}{\Ai} y \lor y \inter{\pOrd}{\Ai} x \\
			x \inter{\pOrd}{\Ai} y \land y \inter{\pOrd}{\Ai} z
				\Impl x \inter{\pOrd}{\Ai} z &
			\inter{\fZero}{\Ai} \inter{\pOrd}{\Ai} x \land
				x \inter{\pOrd}{\Ai} \inter{\fInfty}{\Ai} \\
		\end{array} \\
		for any $x, y, z \in \Ais{\sOrd}$ \\
	\end{tabular} \\
\hline
\hline
\sigArray &
	\begin{tabular}{p{11cm}}
		\begin{compactitem}
			\item $\fArrayRdInter{A}{l}{\Ai} = A(l)$
			\item $\fArrayUpdInter{A}{l}{a}{\Ai} = A_{\mathit{new}}$, where
				$A_{\mathit{new}}(l) = a$ and
				$A_{\mathit{new}}(i) = A(i)$ for $i \neq l$
		\end{compactitem}
		for each
				$A, A_{\mathit{new}} \in \Ais{\sArray}$,
				$l \in \Ais{\sLevel}$ and
				$a \in \Ais{\sAddr}$
	\end{tabular} \\
\hline
\hline
\sigCells &
	\begin{tabular}{p{10.2cm}}
		\begin{array}{>{\hspace{0.2cm}\bullet\hspace{0.2cm}}lcl>{\hspace{0.5em}\bullet\hspace{0.2cm}}lcl}
		\inter{\fMkcell}{\Ai} (e, k, \vect{a}, l) & = &
			\langle e, k, A, l \rangle &
		\inter{\fError}{\Ai}.\inter{\fArr}{\Ai}(l) & = &
			\inter{\fNull}{\Ai} \\
		\langle e, k, A, l \rangle.\inter{\fData}{\Ai} & = & e &
		\langle e, k, A, l \rangle.\inter{\fKey}{\Ai} & = & k \\
		\langle e, k, A, l \rangle.\inter{\fArr}{\Ai} & = & A &
		\langle e, k, A, l \rangle.\inter{\fMax}{\Ai} & = & l \\
		\end{array}
		\hspace{2em} for each
			$e \in \Ais{\sElem}$,
			$k \in \Ais{\sOrd}$,
			$A \in \Ais{\sArray}$,
			and $l \in \Ais{\sLevel}$
	\end{tabular} \\
\hline
\sigMemory &
	\begin{tabular}{l}
		\begin{array}{>{\hspace{0.01cm}\bullet\hspace{0.15cm}}l>{\hspace{0.3cm}\bullet\hspace{0.15cm}}l>{\hspace{0.3cm}\bullet\hspace{0.15cm}}l}
		\inter{\fRd(m,a)}{\Ai} = m(a) &
		\inter{\fUpd}{\Ai}(m,a,c) = m_{a \mapsto c} &
		\inter{m}{\Ai}(\inter{\fNull}{\Ai}) = \inter{\fError}{\Ai}
		\end{array} \\
		for each
			$m \in \Ais{\sMem}$,
			$a \in \Ais{\sAddr}$
			and $c \in \Ais{\sCell}$
	\end{tabular} \\
\hline
\sigReach &
	\begin{tabular}{p{10.2cm}}
		\begin{compactitem}
		\item $\inter{\fEpsilon}{\Ai}$ is the empty sequence
		\item $\inter{[a]}{\Ai}$ is the sequence containing $a \in 
			\Ais{\sAddr}$ as the only element
		\item $\left(\left[a_1 \,..\, a_n\right], 
						\left[b_1\,..\,b_m\right], \left[a_1\,..\,a_n, 
b_1\,..\,b_m\right]\right) \in \inter{\pAppend}{\Ai}$ iff $a_k\neq b_l$.
		\item $(m, \aInit, \aEnd, l, p) \in \inter{\pReach}{\Ai}$
			iff $\aInit=\aEnd$ and $p = \fEpsilon$, or
		there exist addresses $a_1, \ldots, a_n \in \Ais{\sAddr}$ such 
		that:
		\end{compactitem} \\
		\hs{1.4} $\begin{array}{ll}
			\trm{(a) } p = [a_1\,..\,a_n]  \phantom{aaaa}&
			\trm{(c) } m(a_r).\inter{\fArr}{\Ai}(l)
				= a_{r+1}, \;\;\;\text{for}\;\;\;r < n \\
			\trm{(b) } a_1 = \aInit &
			\trm{(d) } m(a_n).\inter{\fArr}{\Ai}(l)
				= \aEnd
		\end{array}$
	\end{tabular} \\
\hline
\hline
$\quad \sigBridge \quad$ &
	\begin{tabular}{p{12cm}}
          \hspace{-0.0cm} for each $m \in \Ais{\sMem}$,
          $p \in \Ais{\sPath}$, $l \in \Ais{\sLevel}$,
          $a_i, a_e \in \Ais{\sAddr}$, $r\in\Ais{\sSet}$
		\begin{compactitem}
		\item $\inter{\fPathToSet}{\Ai}(p) = \{ a_1, \ldots, 
			a_n\}$ for $p = [a_1, \ldots, a_n] \in
			\Ais{\sPath}$
		\item $\inter{\fAddrToSet}{\Ai}(m, a, l) = \big\{
			a'
			\mid \exists p \in \Ais{\sPath}$ \;.\; $(m,a,a',l,p) \in 
					\inter{\pReach}{\Ai} \big\}$
                \item $\inter{\fGetp}{\Ai}(m,a_i,a_e,l) =
                  p \;\text{if}\; (m,a_i,a_e,l,p)\in \inter{\pReach}{\Ai},\;
                  \text{and}\;\fEpsilon\;\text{otherwise}$
		\end{compactitem} \\
		\begin{compactitem}
		\item $\inter{\pOrdList}{\Ai}\left(m,p \right)$
			iff $p=\fEpsilon$ or $p = \lbrack a \rbrack$, or
                        $p = [a_1, \ldots, a_n]$ with $n \geq 2$ and
                 \item[] $m(a_j).\inter{\key}{\Ai} \pOrd m(a_{j+1}).\inter{\key}{\Ai}$ for all $1 \leq j < n$,
			\mbox{for any $m \in \Ais{\sMem}$}
		\item $\inter{\pSkiplist}{\Ai}(m, r, l, a_i, a_e)$ iff
				$\begin{bmatrix}
					\begin{array}{cl}
						\inter{\pOrdList}{\Ai}(m, \inter{\fGetp}{\Ai}(m, a_i, a_e, 0))
								& \land \\
						r = \inter{\fAddrToSet}{\Ai}(m, a_i, 0)	& \land \\
						0 \leq l \land
						\forall a \in r \;.\;
							m(a).\inter{\fMax}{\Ai} \leq l & \land \\
						m(a_e).\inter{\fArr}{\Ai}(l) = \inter{\fNull}{\Ai}
							& \land \\
						\big( 0 = l_{\fMax} \big) \; \lor \\
							\big(\exists l_p \;.\; \inter{s}{\Ai}(l_p) = l \land
								\;\forall\; i \in 0, \ldots, l_p \;.\; \\
								\hspace{1em} m(a_e).\inter{\fArr}{\Ai}(i)
									= \inter{\fNull}{\Ai} \; \land \\
								\hspace{1em}
									\inter{\fPathToSet}{\Ai}(
										\inter{\fGetp}{\Ai}(m, a_i, a_e,
											\inter{s}{\Ai}(i))) \subseteq \\
								\hspace{2em} \inter{\fPathToSet}{\Ai}(
										\inter{\fGetp}{\Ai}(m, a_i, a_e, i)) \big)
					\end{array}
				\end{bmatrix}$
		\end{compactitem}
	\end{tabular} \\
\hline
\end{tabular}
}
\newcommand{\STEP}[1]{\textsf{STEP #1}\xspace}
\newcommand{\StepOne}{\STEP{1}}
\newcommand{\StepTwo}{\STEP{2}}
\newcommand{\StepThree}{\STEP{3}}
\newcommand{\StepFour}{\STEP{4}}
\newcommand{\StepFive}{\STEP{5}}
\newcommand{\DecProcTSL}{
  \begin{tabular}{|l|}\hline
    To decide whether $\varphiini:\TSL$ is SAT:\\[1em]
    \StepOne. Sanitize: \\
    \hspace{6em}$\varphi:=\varphiini\; \land\;\bigwedge\limits_{B=\fArrayUpd{A}{l}{a}\in\varphiini}(\lnew=l+1)$ \\[1.2em]
    \StepTwo. Guess arrangement $\alpha$ of $V_\sLevel(\varphi)$.\\[0.8em]
    \StepThree. Split $\varphi$ into $(\varphiPA \land \alpha)$ and $(\varphiNC \land \alpha)$.\\[0.8em]
    \StepFour. Check SAT of $(\varphiPA \land \alpha)$.\\
    \hspace{10em}If UNSAT $\Into$ return UNSAT\\[0.8em]
    \StepFive. Check SAT of $(\varphiNC \land \alpha)$ as follows:\\[0.8em]
    \begin{tabular}{@{\hspace{2em}}l}
    $4.1$ Let $k=|V_\sLevel(\varphiNC \land \alpha)|$.\\
    $4.2$ Check $\toTSLK{\varphiNC \land \alpha}:\TSLK(k)$:\\
    \hspace{8em}If SAT $\Into$ return SAT\\
    \hspace{10.35em} else return UNSAT.
    \end{tabular}\\ \hline
  \end{tabular}
}
\newcommand{\DecProcTSLDualFig}{
\begin{figure}[t!]
  \begin{tabular}{ll}
    \DecProcTSL &
    \begin{tabular}{@{}l@{}}\\[0.5em]
    \includegraphics[scale=0.36]{}
    \end{tabular}
  \end{tabular}
    \caption{A decision procedure for the satuasfibility of \TSL formulas (left).
      A split of $\varphi$ obtained after \StepOne into $\varphiPA$ and $\varphiNC$ (right).}
  \label{fig:decprocTSL}
\end{figure}
}
\title{A Decidable Theory of Skiplists\\ of Unbounded Size and Arbitrary Height 
}
\author{C\'esar S\'anchez$^{1,2}$ \and Alejandro S\'anchez$^{1}$}
\institute{IMDEA Software Institute, Madrid, Spain
  \and
  Institute for Information Security, CSIC, Spain\\
	\email{\{cesar.sanchez,alejandro.sanchez\}@imdea.org}
}
\newcommand{\hi}[1]{}
\newcommand{\Kpa}{\K^{\text{PA}}}
\newcommand{\finv}{f^*}
\newcommand{\MGC}{\textsc{MGC}\xspace} 
\newcommand{\mA}{\mathcal{A}}
\newcommand{\mB}{\mathcal{B}}
\newcommand{\map}{\ensuremath{f}}
\newcommand{\trphi}{\toTSLK{\varphi}}
\newcommand{\trpsi}{\toTSLK{\psi}}
\newcommand{\varphiNC}{\varphi^\text{NC}}
\newcommand{\varphiPA}{\varphi^\text{PA}}
\newcommand{\varphiini}{\varphi_\text{in}}
\newcommand{\arr}{\mathit{arr}}
\newcommand{\lnew}{l_{\textit{new}}}
\renewcommand{\pc}{\ensuremath{\mathit{pc}}\xspace}
\newcommand{\phiOrd}[1]{\ensuremath{\varphi_{\mathrm{ord}(#1)}}\xspace}
\newcommand{\phiNext}{\ensuremath{\varphi_{\mathrm{next}}}\xspace}
\newcommand{\phiPredLess}{\ensuremath{\varphi_{\mathrm{predLess}}}\xspace}
\newcommand{\phiBounded}{\ensuremath{\varphi_{\mathrm{bound}}}\xspace}
\newcommand{\exPsi}{\ensuremath{\psi}\xspace}
\newcommand{\exPsiNorm}{\ensuremath{\psi_{\mathrm{norm}}}\xspace}
\newcommand{\exPsiSanit}{\ensuremath{\psi_{\mathrm{sanit}}}\xspace}
\newcommand{\exPsiPA}{\ensuremath{\psi^{\mathrm{PA}}}\xspace}
\newcommand{\exPsiNC}{\ensuremath{\psi^{\mathrm{NC}}}\xspace}
\begin{document}

\maketitle

\begin{abstract}
This paper presents a theory of skiplists of arbitrary height, and shows
decidability of the satisfiability problem for quantifier-free formulas.

A skiplist is an imperative software data structure that implements
sets by maintaining several levels of ordered singly-linked lists in
memory, where each level is a sublist of its lower levels.
Skiplists are widely used in practice because they offer a
performance comparable to balanced binary trees, and can be
implemented more efficiently.
To achieve this performance, most implementations dynamically
increment the height (the number of levels).
Skiplists are difficult to reason about because of the dynamic size
(number of nodes) and the sharing between the different
layers. Furthermore, reasoning about dynamic height adds the challenge
of dealing with arbitrary many levels.


The first contribution of this paper is the theory \TSL that allows to
express the heap memory layout of a skiplist of arbitrary height. The
second contribution is a decision procedure for the satisfiability
problem of quantifier-free \TSL formulas.  The last contribution is to
illustrate the formal verification of a practical skiplist
implementation using this decision procedure.
\end{abstract}

\section{Introduction}
\label{sec:introduction}

%
A skiplist~\cite{pugh90skiplists} is a data structure that implements
sets, maintaining several sorted singly-linked lists in memory.
Skiplists are structured in levels, where each level consists of a
singly-linked list.
Each node in a skiplist stores a value and at least the pointer 
corresponding to the list at the lowest level. Some nodes also contain 
pointers at higher levels, pointing to the next node present at that 
level.
The ``skiplist property'' establishes that the lowest level
(backbone) list is ordered, and that list at level $i+1$ is a sublist
of the list at level $i$.
%
%
%
Search in skiplists is (probabilistically) logarithmic.
The advantage of skiplists compared to balanced search trees is that
skiplists are simpler and more efficient to implement.

Consider the skiplist layout in Fig.~\ref{fig:conc:sl:exone}.
Higher-level pointers allow to \emph{skip} many elements of the
backbone list during the search. A search is performed from left to
right in a top down fashion, progressing as much as possible in a
level before descending.  Fig.~\ref{fig:conc:sl:exone} shows in red
the nodes traversed when looking value $88$. The search starts at
level $3$ of node \head, that points to node \tail, which stores value
$+\infty$, greater than $88$. Consequently, the search continues at
\head by moving down one level to level $2$.  The successor of \head
at level $2$ stores value $22$, which is smaller than $88$. Hence, the
search continues at level $2$ from the node storing $22$ until a value
greater than $88$ is found. The expected logarithmic search of
skiplists follows from the probability of a node being present at a
certain level decreasing by $1/2$ as the level increases
(see~\cite{pugh90skiplists} for an analysis of the running time of
skiplists).

In practice, implementations of skiplists vary the height dynamically
maintaining a variable that stores the current highest level of any
node in the skiplist. The theory \TSL presented in this paper allows
to automatically proof verification conditions of skiplists with
height unbounded (as indicated by a this variable).

\begin{figure}[t]
\centering
\includegraphics[scale=0.35]{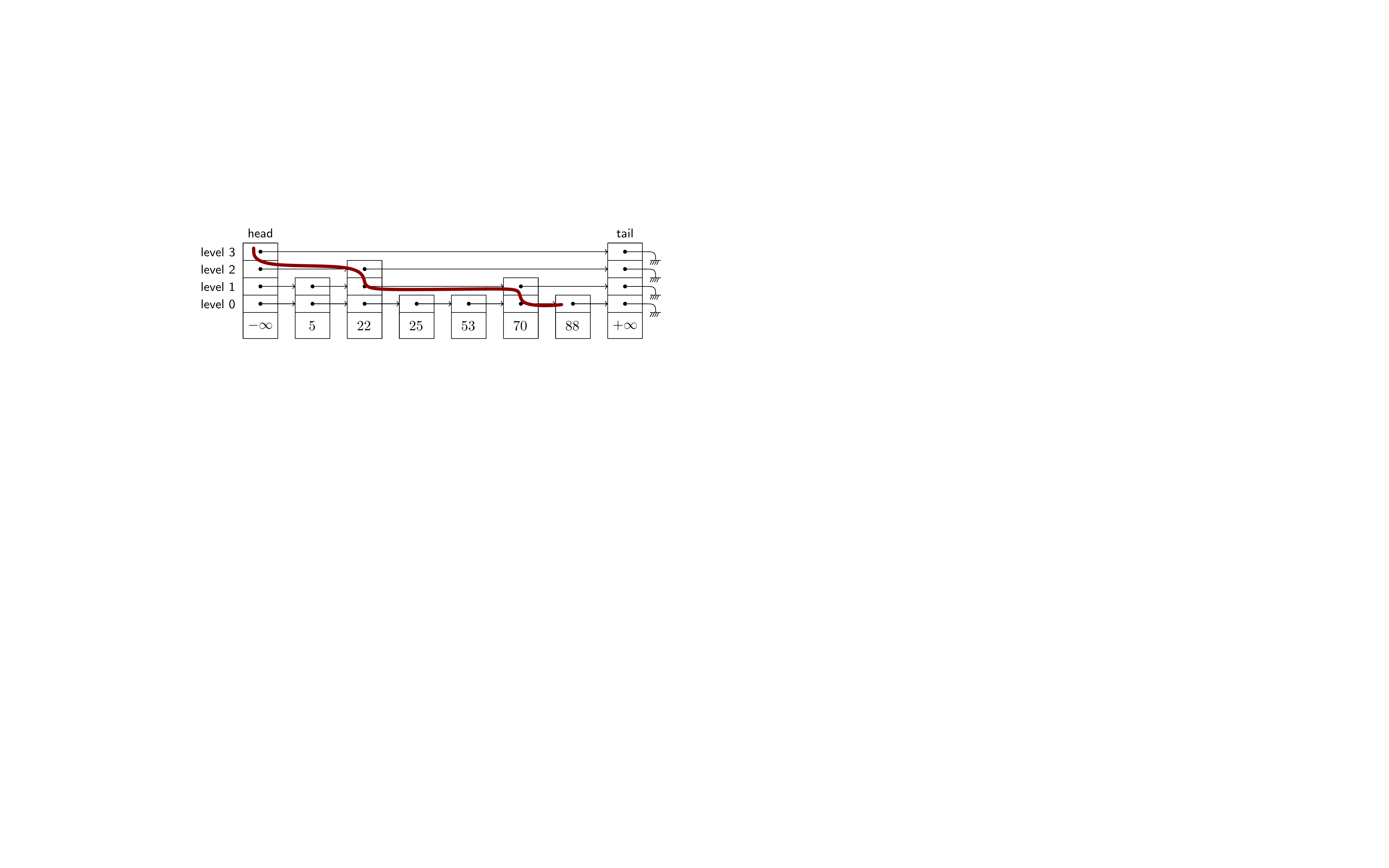}
\caption{A skiplist with $4$ levels, and the traversal
  searching $88$ (in red).}
\label{fig:conc:sl:exone}
\end{figure}

We are interested in the formal verification of implementations of
skiplists, which requires to reason about unbounded mutable data
stored in the heap. One popular approach to the verification of heap
programs is Separation Logic~\cite{reynolds02separation}. Skiplists,
however, are problematic for separation-like approaches due to the
aliasing and memory sharing between nodes at different levels.  Most
of the work in formal verification of pointer programs follows program
logics in the Hoare tradition, either using separation logic or with
specialized program logics to deal with the heap and pointer
structures~\cite{lahiri08back,yorsh06logic,bouajjani09logic,madhusudan11decidable}.
Our approach is complementary, consisting on the design of specialized
decision procedures for memory layouts which can be incorporated into
a reasoning system for proving temporal properties, in the style of
Manna-Pnueli~\cite{manna95temporal}. In particular for proving
liveness properties we advocate the use of general verification
diagrams~\cite{browne95generalized},%
which allow a clean separation between the temporal reasoning with the
reasoning about the data being manipulated. Proofs (of both safety and
liveness properties) are ultimately decomposed into verification
conditions (VCs) in the underlying theory of state assertions. This
paper studies the automatic verification of VCs involving the
manipulation of skiplist memory layouts. For illustration purposes we
restrict the presentation in this paper to safety properties.

Logics like~\cite{lahiri08back,yorsh06logic,bouajjani09logic} are very
powerful to describe pointer structures, but they require the use of
quantifiers to reach their expressive power.  Hence, these logics
preclude their combination with methods like
Nelson-Oppen~\cite{nelson79simplification} or
BAPA~\cite{kuncak05algorithm} with other aspects of the program state.
Instead, our solution use specific theories of memory
layouts~\cite{ranise06theory,sanchez10decision,sanchez11theory} that
allow to express powerful properties in the quantifier-free fragment
using built-in predicates.

For example, in~\cite{sanchez11theory} we presented \TSLK, a family of
theories of skiplists of fixed height, which are unrolled into the
theory of ordered singly-linked lists~\cite{sanchez10decision}.
Limiting the height of the skiplist (for example to a maximum of $32$
levels) would enable to use of \TSLK for verification of such
implementations but unfortunately, the model search involved in the
automatic proofs of \TSLK VCs is only practical for much lower
heights. Handling dynamic height was still an open problem that
precluded the verification of practical skiplist implementations. We
solve this open problem here with \TSL.  The theory \TSL we present in
this paper allows us to reduce the verification of a skiplist of
arbitrary height to verification conditions of \TSLK, where the value
of \K is small and \emph{independent of the skiplist height} in any
state of any implementation.

The rest of the paper is structured as follows.
Section~\ref{sec:skiplists} presents a running example of a program
that manipulates skiplists. Section~\ref{sec:tsl} introduces \TSL: the
theory of skiplists of arbitrary height.
Section~\ref{sec:decidability} includes the decidability
proof. Section~\ref{sec:examples} provides some examples of the use of
\TSL in the verification of skiplists. Finally,
Section~\ref{sec:conclusion} concludes the paper. Some proofs are
missing due to space limitation and are included in the appendix.

\section{A Skiplist Implementation}
\label{sec:skiplists}

\begin{figure}[!t]
\centering
\vspace{-1em}
\includegraphics[scale=0.31]{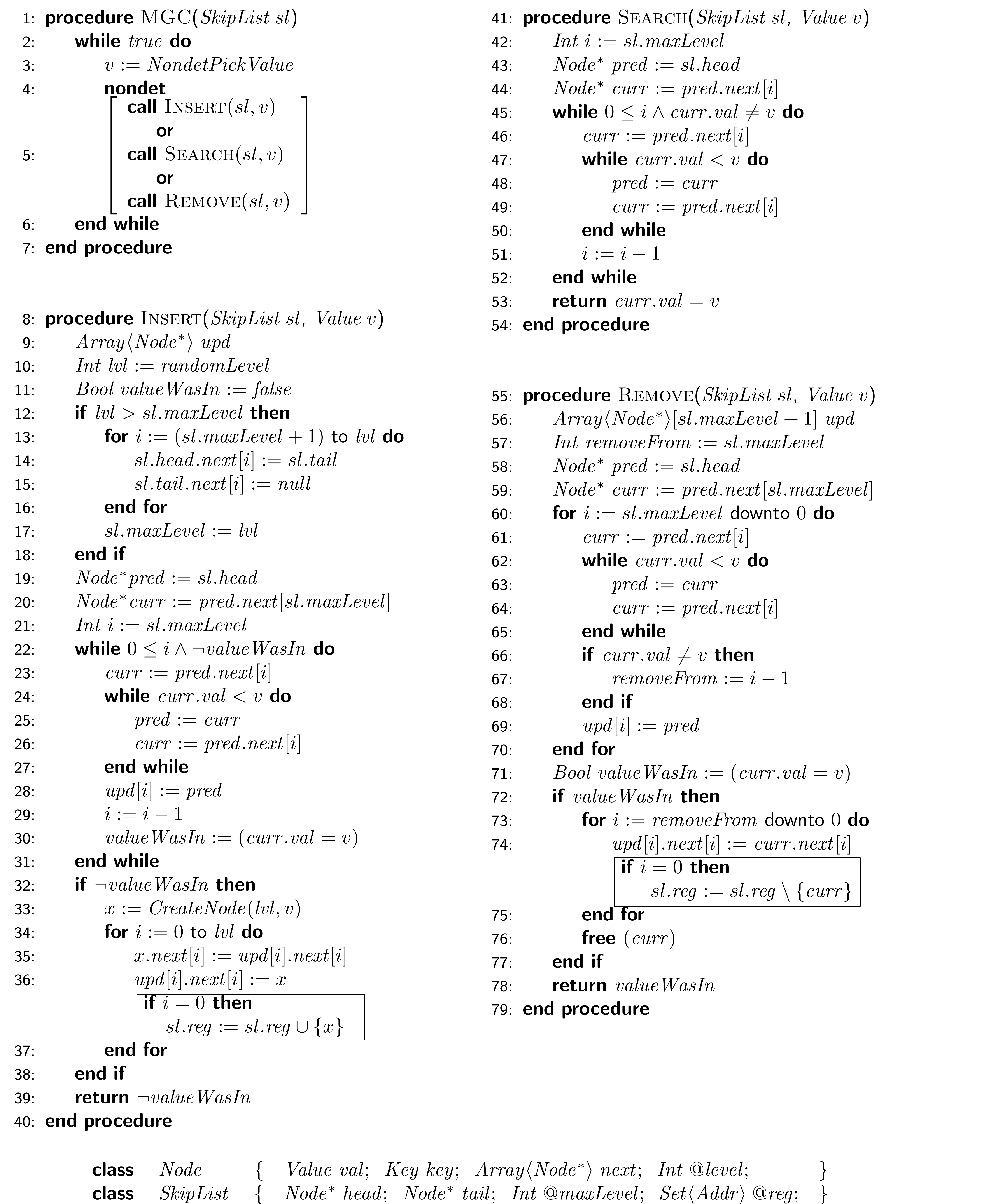}
\caption{Most general client, \Insert, \Search and \Remove algorithms for 
  skiplists, and the classes \Node and \SkipList.}
\label{fig:skiplist-algorithms}
\end{figure}

Fig.~\ref{fig:skiplist-algorithms} shows the pseudo-code of a
sequential implementation of a skiplist, whose basic classes are \Node
and \SkipList.  Each node stores a key (in the field \key) for keeping
the list ordered, a field \val containing the actual value stored, and
a field \fNext: an array of arbitrary length containing the addresses
of the following nodes at each level. 
An entry in \fNext at index $i$ points to the successor node at level
$i$. 
Given an object $sl$ of class \SkipList, we use $sl.\head$, $sl.\tail$
and $sl.\maxLevel$ for the data members storing the head node, the
tail node and the maximum level in use (resp.) When the \SkipList
object $sl$ is clear from the context, we use \head, \tail and
\maxLevel instead of $sl.\head$, $sl.\tail$ and $sl.\maxLevel$.  The
program in~Fig.\ref{fig:skiplist-algorithms} allows executions in
which the height of a skiplist, as stored in \maxLevel, can grow
beyond any bound.
Finally, nodes contain a ghost field \level storing the highest level
of \fNext.  We use the @ symbol to denote a ghost field and boxes to
describe ghost code. This extra ``ghost'' code is only added for
verification purposes and does not influence the execution of the
existing program (it does not affect the control flow or non-ghost
data), and it is removed during compilation.  Objects of \SkipList
maintain one ghost field \reg to represent the region of the heap (set
of addresses) managed by the skiplist.
%
%
In this implementation, \head and \tail are sentinel nodes for the
first and last nodes of the list, initialized with $\key = -\infty$
and $\key = +\infty$ (resp.) These nodes are not removed during the
execution and their $\key$ field remains unchanged.
The amount of ghost code introduced for verification is very small,
containing only the book-keeping of the region \reg.

Fig.~\ref{fig:skiplist-algorithms} shows the algorithms for insertion
(\Insert), search (\Search) and removal
(\Remove). Fig.~\ref{fig:skiplist-algorithms} also shows the most
general client \MGC, a program that non-deterministically performs
calls to skiplist operations. In this implementation, we assume that
the initial program execution begins with an empty skiplist containing
only \head and \tail nodes at level $0$ has already been created. New
nodes are then added using the \Insert operation. Since \MGC can
execute all possible sequence of calls, it can be used to verify
properties like method termination or skiplist-shape preservation.
The program updates the ghost field \reg to represent the set of nodes
that forms the skiplist at every state. That is:
\begin{inparaenum}[(a)]
\item a new node becomes part of the skiplist as soon as it is
  connected at level $0$ in \Insert (line \textsf{36}); and
\item a node that is being removed stops being part of the skiplist
  when it is disconnected at level $0$ in \Remove (line \textsf{74}).
\end{inparaenum}
For simplicity, we assume in this paper that the fields \val and \key
within an object of type \Node contain the same object.  A crucial
property that we wish to prove of this implementation is that the
memory layout maintained by the algorithm is that of a ``skiplist'':
the lower level is an ordered acyclic single linked list, all levels
are subset of lower levels, and all the elements stored are precisely
those stored in addresses contained in region \reg.

\section{The Theory of Skiplists of Arbitrary Height: \TSL}
\label{sec:tsl}

We present in this section \TSL: a theory to reason about skiplists of
arbitrary height. Formally, \TSL is a combination of different
theories.


We begin with a brief overview of notation and concepts. A signature
\Sig is a triple $(S,F,P)$ where $S$ is a set of sorts, $F$ a set of
functions and $P$ a set of predicates. If $\Sig_1 = \left(S_1, F_1,
  P_1\right)$ and $\Sig_2 = \left(S_2, F_2, P_2\right)$, we define
$\Sig_1 \cup \Sig_2 = \left(S_1 \cup S_2, F_1 \cup F_2, P_1 \cup
  P_2\right)$.  Similarly we say that $\Sig_1 \subseteq \Sig_2$ when
$S_1 \subseteq S_2$, $F_1 \subseteq F_2$ and $P_1 \subseteq P_2$. If
$t(\varphi)$ is a term (resp.  formula), then we denote with
$\TVar{\sigma}(t)$ (resp.  $\TVar{\sigma}(\varphi)$) the set of
variables of sort $\sigma$ occurring in $t$
(resp. $\varphi$). Similarly, we denote with $\TCons{\sigma}(t)$
(resp. $\TCons{\sigma}(\varphi)$) the set of constants of sort
$\sigma$ occurring in $t$ (resp. $\varphi$).

A \SigInter is a map from symbols in \Sig to values. A
\SigStruct is a \SigInter over an empty set of variables. A
\SigFormula over a set \setX of variables is satisfiable whenever it
is true in some \SigInter over \setX. Let $\Omega$ be a signature, 
\Ai an \OmgInter over a set \VarV of variables, $\Sig \subseteq \Omg$ and 
$\VarU \subseteq \VarV$. $\Ai^{\Sig, \VarU}$ denotes the interpretation 
obtained from \Ai restricting it to interpret only the symbols in \Sig and 
the variables in \VarU. We use \AiSig to denote $\Ai^{\Sig, \emptyset}$.
A \SigTheory is a pair $(\Sig, \SigClass)$ where \Sig is a signature and 
\SigClass is a class of \SigStructs. Given a theory $\Theo = (\Sig, 
\SigClass)$, a \TheoInter is a \SigInter \Ai such that $\AiSig \in 
\SigClass$.  Given a \SigTheory \Theo, a \SigFormula $\varphi$ over a set of 
variables \setX is \TheoSat whenever it is true on a \TheoInter over \setX.

\begin{figure}[!b]
	\TSLSignatureBig
\caption{The signature of the \TSL theory}
\label{fig:tsl-signature}
\end{figure}
\begin{figure}[!p]
		\TSLInterpretationsLNCS
\caption{Characterization of a \TSL-interpretation \Ai}
\label{fig:tsl-interpretation}
\end{figure}

Formally, the theory of skiplists of arbitrary height is defined as
$\TSL = \left(\sigTSL, \interTSL\right)$, where 
\sigTSL is the union of the following signatures, shown in
Fig.~\ref{fig:tsl-signature}
\[
\begin{array}{lcc}
  \sigTSL & = & 
  \sigLevel  \cup
  \sigOrd    \cup
  \sigArray  \cup
  \sigCells  \cup
  \sigMemory \cup
  \sigReach  \cup
  \sigSets   \cup
	\sigBridge
\end{array}
\]
%
%
%
and \interTSL is the class of \sigTSL-structures satisfying the conditions
listed in Fig.~\ref{fig:tsl-interpretation}.

Informally, sort \sAddr represents addresses; \sElem the universe of
elements that can be stored in the skiplist; \sLevel the levels of a
skiplist; \sOrd the ordered keys used to preserve a strict order in
the skiplist; \sArray corresponds to arrays of addresses, indexed by
levels; \sCell models \emph{cells} representing objects of class
\Node; \sMem models the heap, a map from addresses to cells; \sPath describes 
finite sequences of non-repeating addresses
to model non-cyclic list paths, while \sSet models sets of
addresses---also known as regions.

The symbols in \sigSets are interpreted 
according to their standard interpretations over set of addresses.
\sigLevel contains symbols $0$ and $s$ to build the natural numbers
with the usual order.  \sigOrd models the order between elements, and
contains two special elements $-\infty$ and $+\infty$ for the lowest
and highest values in the order $\pOrd$.
\sigArray is the theory of arrays defining two operations:
\fArrayRd{A}{i} to capture the element of sort \sAddr stored in array
$A$ at position given by $i$ of sort \sLevel, and \fArrayUpd{A}{i}{a}
for an array write, which defines the array that results from $A$ by
replacing the element at position $i$ with $a$.
\sigCells contains the constructors and selectors for building and inspecting 
cells, including $\fError$ for incorrect dereferences.
\sigMemory is the signature for heaps, with the usual memory access and 
single memory mutation functions.
\sigSets is the theory of finite sets of addresses.
The signature \sigReach contains predicates to check reachability of addresses 
using paths at different levels. Finally,
\sigBridge contains auxiliary functions and predicates to manipulate and inspect 
paths as well as a native predicate for the skiplist memory shape.

\section{Decidability of  \TSL}
\label{sec:decidability}

%
\DecProcTSLDualFig
Fig.~\ref{fig:decprocTSL} shows a decision procedure for the
satisfiability problem of \TSL formulas, by a reduction to 
satisfiability of quantifier-free \TSLK formulas and quantifier-free
Presburger arithmetic formulas.  We start from a \TSL formula
$\varphi$ in disjunctive normal form: $\varphi_1 \lor \cdots \lor
\varphi_n$ so the procedure only needs to check the satisfiability of
a conjunction of \TSL literals $\varphi_i$.  The rest of this section
describes the decision procedure and proves its correctness.



A flat literal is of the form $x=y$, $x \neq y$, $x = f(y_1, \ldots,
y_n)$, $p(y_1, \ldots, y_n)$ or $\lnot p(y_1, \ldots, y_n)$, where
$x,y,y_1,\ldots,y_n$ are variables, $f$ is a function symbol and $p$
is a predicate symbol defined in the signature of \TSL.  We first
identify a set of normalized literals. All other literals can be
converted into normalized literals.

\newcounter{def-normalized}
\setcounter{def-normalized}{\value{definition}}
\begin{definition}
\label{def:normalized-literals}
A normalized \TSL-literal is a flat literal of the form:

\begin{tabular}{p{3.9cm}p{3.9cm}l}
	$e_1 \neq e_2$ & $a_1 \neq a_2$ & $l_1 \neq l_2$ \\
	$a = \fNull$ & $c = \fError$ & $c = \fRd(m, a)$ \\
	$k_1 \neq k_2$ & $k_1 \pOrd k_2$ & $m_2 = \fUpd(m_1, a, c)$ \\
	$c = \fMkcell(e,k,A,l)$ & $l_1 <         l_2$ & $l=q$ \\
	$s = \{a\}$ & $s_1 = s_2 \cup s_3$ & $s_1 = s_2 \setminus s_3$ \\
	$a = \fArrayRd{A}{l}$ & $B = \fArrayUpd{A}{l}{a}$ & \\ 
	$p_1 \neq p_2$ & $p = [a]$ & $p_1 = \fRev(p_2)$ \\
\end{tabular}

\begin{tabular}{p{3.9cm}p{3.9cm}l}
	$s = \fPathToSet(p)$ & $\pAppend(p_1, p_2, p_3)$ &
		$\lnot \pAppend(p_1, p_2, p_3)$ \\
	$s = \fAddrToSet(m, a, l)$ & $p = \fGetp(m, a_1, a_2, l)$ & \\
	$\pOrdList(m,p)$ &  & $\pSkiplist(m, s, a_1, a_2)$ 
\end{tabular}


\noindent where $e$, $e_1$ and $e_2$ are \sElem-variables; $a$, $a_1$
and $a_2$ are \sAddr-variables; $c$ is a \sCell-variable; $m$, $m_1$
and $m_2$ are \sMem-variables; $p$, $p_1$, $p_2$ and $p_3$ are
\sPath-variables; $s$, $s_1$, $s_2$ and $s_3$ are \sSet-variables; $A$
and $B$ \sArray-variables; $k$, $k_1$ and $k_2$ are \sOrd-variables
and $l$, $l_1$ and $l_2$ are \sLevel-variables, and $q$ is an \sLevel
constant.
\end{definition}


The set of non-normalized literals consists on all flat literals not
given in Definition~\ref{def:normalized-literals}. For instance, \(e =
c.\fData \) can be rewritten as $\exists_{\sOrd}k\;
\exists_{\sArray}A\; \exists_{\sLevel}l \mid c = \fMkcell (e, k, A,
l)$ and $\pReach(m, a_1, a_2, l, p)$ can be translated into the
equivalent formula $a_2 \in \fAddrToSet(m, a_1, l) \And p = \fGetp(m, a_1, a_2,
l)$.

\newcounter{lem-normalized}
\setcounter{lem-normalized}{\value{lemma}}
\begin{lemma}
Every \TSL-formula is equivalent to a collection of conjunctions of
normalized \TSL-literals.
\end{lemma}

For example, consider the skiplist presented in
Fig.~\ref{fig:conc:sl:exone} and the following formula $\exPsi$ that
we will use as a running example:
\[ 
\exPsi\;\;\; :\;\;\; i = 0 \land A = \fRd(\heap, \head).\fArr \land B = \fArrayUpd{A}{i}{\tail}.\] 
This formula establishes that $B$ is an array that is equal to the next
pointers of node \head, except for the lower level that now contains
the address of \tail. To check the satisfiability of this formula we
first normalize it, obtaining $\exPsiNorm$:
\[ \exPsiNorm\;\;\;:\;\;\; i = 0 \;\land\; 
\begin{pmatrix}
  c = \fRd(\heap, \head) \;\land\; \\
  c = \fMkcell(e, k, A, l) \;\land\;\\ 
  l = 3 
\end{pmatrix}  \;\land\;
B= \fArrayUpd{A}{i}{\tail}.
\]
%

\newcommand{\mysubsection}[1]{\paragraph{\textbf{\textup{#1}}}}
\mysubsection{4.1 \StepOne: Sanitation}
The decision procedure begins with \StepOne by sanitizing the normalized
collection of literals received as input.
\begin{definition}[Sanitized] A conjunction of normalized literals is
  sanitized if for every literal $B=\fArrayUpd{A}{l}{a}$ there is a
  literal of the form $\lnew=l+1$, where $\lnew$ is a newly introduced
  variable if necessary.
\end{definition}
The fresh level variables in sanitized formulas will be later used in
the proof of Theorem~\ref{thm:NoConstants} below to construct a proper
model by replicating level $\lnew$ instead of level $l$. In turn,
sanitation allows to show the existence of models with constants from
models of sub-formulas without constants.  Sanitizing a formula does
not affect its satisfiability because it only adds an arithmetic
constraint $(\lnew=l+1)$ for a fresh new variable $\lnew$. Hence, a
model of $\varphi$ (the sanitized formula) is a model for $\varphiini$
(the input formula), and from a model of $\varphiini$ one can
immediately build a model of $\varphi$ by computing the values of the
variables $\lnew$.
Considering again our example, after sanitizing \exPsiNorm we obtain
$\exPsiSanit$:
\[ 
\exPsiSanit : \exPsiNorm \;\land\; \lnew = i + 1.
\]

\mysubsection{4.2 \StepTwo: Order arrangements, and \StepThree: Split}
In a given model of a formula, every level variable is assigned a
natural number. Hence, every two variables are either assigned the
same value or their values are ordered. We call these order predicates
an \emph{order arrangement}. Since there is a finite number of level
variables, there is a finite number of possible order
arrangements. \StepTwo consists of guessing one order arrangement.

\StepThree uses the order arrangement to reduce the satisfiability of
a sanitized formula that follows an order arrangement into the
satisfiability of a Presburger Arithmetic formula (checked in
\StepFour), and the satisfiability of a sanitized formula
\emph{without constants} (checked in \StepFive). An essential element
in the construction is the notion of gaps. The ability to introduce
gaps in models allows to show that if a model for the formula without
constants exists, then a model for the formula with constants also
exists (provided the Presburger constraints are also met).

\begin{definition}[Gap]
  Let $\Ai$ be a model of $\varphi$. We say that $n\in\Nat$ is a
	\emph{gap} in $\Ai$ if there are variables $l_1,l_2$ in
  $V_\sLevel(\varphi)$ such that $l_1^\Ai < n < l_2^\Ai$, but there is
  no $l$ in $V_\sLevel(\varphi)$ with $l^\Ai=n$.
\end{definition}
Consider \exPsiSanit for which $V_{\sLevel}(\exPsiSanit) = \{i, \lnew,
l\}$.  A model $\Ai_{\psi}$ that interprets variables $i$, $\lnew$ and $l$ as 
$0$, $1$ and $3$ respectively has a gap at $2$.
A gap-less model is a model without gaps, either between two level
variables or above any level variable.
\begin{definition}[Gap-less model]
  A model $\Ai$ of $\varphi$ is a gap-less model whenever it has no
  gaps, and for every array $C$ in $\sArray^\Ai$ and level
  $n>l^\Ai$ for all $l\in V_\sLevel(\varphi)$, $C(n)=\fNull$.
\end{definition}
%



The following intermediate definition and lemma greatly simplify
subsequent constructions by relating the satisfaction of literals
between two models that agree on most sorts and the connectivity of
relevant levels.

\begin{definition}
  Two interpretations $\Ai$ and $\Bi$ of a  formula $\varphi$
  \emph{agree} on sorts $\sigma$ whenever $\Ai_\sigma=\Bi_\sigma$ and
  \begin{compactenum}[(i)]
    \item for every $v\in V_\sigma(\varphi)$, $v^\Ai=v^\Bi$,
    \item for every function symbol $f$ with domain and codomain from sorts in
      $\sigma$, $f^\Ai=f^\Bi$ and for every predicate symbol with
      domain in $\sigma$, $P^\Ai$ iff $P^\Bi$.
    \end{compactenum}
\end{definition}

\newcounter{lem-sanitized}
\setcounter{lem-sanitized}{\value{lemma}}
\begin{lemma}
  Let $\Ai$ and $\Bi$ be two interpretations of a sanitized formula
  $\varphi$ that agree on $\sigma:\{\sAddr,\sElem,\sOrd,\sPath,\sSet\}$, and
  such that for every $l\in V_\sLevel(\varphi)$, $m\in
  V_\sMem(\varphi)$, and $a\in \sAddr^\Ai$:
\(
    m^\Ai(a).\fArr^\Ai(l^\Ai)=m^\Bi(a).\fArr^\Bi(l^\Bi).
\)
  It follows that 
 \( 
   \pReach^\Ai(m^\Ai,\aInit^\Ai,\aEnd^\Ai,l^\Ai,p^\Ai) \;\;\;\text{if and only if}\;\;\; \pReach^\Bi(m^\Bi,\aInit^\Bi,\aEnd^\Bi,l^\Bi,p^\Bi).
 \)
  \label{lem:reachSame}
\end{lemma}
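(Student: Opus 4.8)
The plan is to unfold the semantic definition of $\pReach$ from Fig.~\ref{fig:tsl-interpretation} and to check that every piece of data that definition inspects is shared by $\Ai$ and $\Bi$; no genuine induction is required. First I would fix notation: the five arguments of the two $\pReach$ literals are variables of $\varphi$, namely $m\in V_{\sMem}(\varphi)$, $l\in V_{\sLevel}(\varphi)$, $\aInit,\aEnd\in V_{\sAddr}(\varphi)$ and $p\in V_{\sPath}(\varphi)$. Because $\Ai$ and $\Bi$ agree on $\{\sAddr,\sElem,\sOrd,\sPath,\sSet\}$ we have $\sAddr^\Ai=\sAddr^\Bi$, $\sPath^\Ai=\sPath^\Bi$, $\aInit^\Ai=\aInit^\Bi$, $\aEnd^\Ai=\aEnd^\Bi$ and $p^\Ai=p^\Bi$; I write $a_i$, $a_e$ and $\hat p$ for these common values. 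The memory $m$ and the level $l$ are the only arguments whose interpretations may genuinely differ, yet the hypothesis states precisely that the composite ``successor at level $l$'' map $a\mapsto m(a).\fArr(l)$ is the same in the two models on every address of $\sAddr^\Ai$ -- even though on the two sides $\fArr$ is read at $l^\Ai$ and at $l^\Bi$, which as naturals may be distinct.

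I would then dispatch the two disjuncts of the definition of $\pReach$. The first, $a_i=a_e\land \hat p=\fEpsilon$, is symmetric in $\Ai$ and $\Bi$ because $\fEpsilon$ has a fixed interpretation and $a_i,a_e,\hat p$ are shared. For the second disjunct, suppose $\pReach^\Ai$ holds with witnesses $a_1,\dots,a_n\in\sAddr^\Ai$ satisfying conditions (a)--(d) of Fig.~\ref{fig:tsl-interpretation}. Conditions (a) $\hat p=[a_1,\dots,a_n]$ and (b) $a_1=a_i$ mention only addresses and the path, which live in agreed sorts, hence hold unchanged in $\Bi$; for (c), fixing $r<n$ and instantiating the hypothesis at $a:=a_r$ gives $m^\Bi(a_r).\fArr^\Bi(l^\Bi)=m^\Ai(a_r).\fArr^\Ai(l^\Ai)=a_{r+1}$, and instantiating it at $a:=a_n$ turns (d) into $m^\Bi(a_n).\fArr^\Bi(l^\Bi)=a_e$. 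So the same tuple $a_1,\dots,a_n$ witnesses $\pReach^\Bi$, and since the hypothesis is symmetric in $\Ai$ and $\Bi$ the converse implication follows the same way, giving the biconditional.

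The only delicate point, and the closest this argument comes to an obstacle, is the scope of the hypothesis: it quantifies over all $a\in\sAddr^\Ai$ and not merely over $V_{\sAddr}(\varphi)$, which is exactly what permits instantiating it at the internal witness addresses $a_r$ that appear when $\pReach$ is unfolded; one also has to observe at the outset that $m$ and $l$ occurring in the literal are indeed among $V_{\sMem}(\varphi)$ and $V_{\sLevel}(\varphi)$, so that the hypothesis is applicable to them. With those observations in place the rest is a mechanical reading of the $\pReach$ clause in Fig.~\ref{fig:tsl-interpretation}.
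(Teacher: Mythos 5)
Your proposal is correct and follows essentially the same route as the paper's proof: unfold the semantic definition of $\pReach$, handle the empty-path disjunct using the agreement on $\sAddr$ and $\sPath$, and in the witness case transfer the same sequence $a_1,\dots,a_n$ to $\Bi$ by instantiating the hypothesis $m^\Ai(a).\fArr^\Ai(l^\Ai)=m^\Bi(a).\fArr^\Bi(l^\Bi)$ at each $a_r$ and at $a_n$, with symmetry giving the converse. Your explicit remark that the hypothesis ranges over all of $\sAddr^\Ai$ (not just $V_\sAddr(\varphi)$), which is what licenses instantiation at the internal witness addresses, is a point the paper uses only implicitly, but it is the same argument.
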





We show now that if a sanitized formula without constants, as the one
obtained after the split in \StepThree, has a model then it has a
model without gaps.

\newcounter{lem-gap}
\setcounter{lem-gap}{\value{lemma}}
\begin{lemma}[Gap-reduction]
	Let $\Ai$ be a model of a sanitized formula $\varphi$ without
  constants, and let $\Ai$ have a gap at $n$. Then, there is a model
  $\Bi$ of $\varphi$ such that, for every $l\in V_{\sLevel}(\varphi)$:
  $l^\Bi=l^\Ai-1$ if $l^\Ai>n$, and $l^\Bi=l^\Ai$ if $l^\Ai<n$.
   The number of gaps in $\Bi$ is one less than in $\Ai$.
  \label{lem:gapReduction}
\end{lemma}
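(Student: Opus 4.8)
The plan is to obtain $\Bi$ from $\Ai$ by \emph{deleting the unused level} $n$: since $n$ is a gap, no variable in $V_\sLevel(\varphi)$ is interpreted as $n$ (and $n\geq 1$), so every array and cell can be reindexed by dropping the entry at index $n$ and sliding the higher entries down by one. Let $\bar g:\Nat\to\Nat$ be $\bar g(j)=j$ for $j<n$ and $\bar g(j)=j+1$ for $j\geq n$ — a strictly monotone bijection onto $\Nat\setminus\{n\}$ — with left inverse $g$ given by $g(j)=j$ for $j<n$ and $g(j)=j-1$ for $j\geq n$. For $C\in\sArray^\Ai$ put $\Phi(C)(j)=C(\bar g(j))$, and for a cell put $\Psi(\langle e,k,C,l\rangle)=\langle e,k,\Phi(C),g(l)\rangle$. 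Define $\Bi$ to agree with $\Ai$ on the sub-signature over $\{\sAddr,\sElem,\sOrd,\sPath,\sSet\}$ (same universes, same values of all variables of those sorts, same interpretation of $\{\cdot\},\cup,\cap,\setminus,\in,\subseteq,[\cdot],\fRev,\fPathToSet,\pAppend,\pOrd$), to put $l^\Bi=g(l^\Ai)$ for every $l\in V_\sLevel(\varphi)$, $A^\Bi=\Phi(A^\Ai)$ for array variables, $c^\Bi=\Psi(c^\Ai)$ for cell variables, $m^\Bi=\Psi\circ m^\Ai$ for memory variables, $\fNull^\Bi=\fNull^\Ai$, $\fError^\Bi=\Psi(\fError^\Ai)$, and to interpret $\fArrayRd,\fArrayUpd,\fMkcell$, the cell selectors, $\fRd,\fUpd$, $0,s,<$ and all of $\sigReach\cup\sigBridge$ by their forced standard meanings. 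Then $\Bi$ is a \TSL-interpretation: the only condition that is not immediate, $\fError^\Bi.\fArr^\Bi(j)=\fNull^\Bi$ for all $j$, holds because $\Phi$ sends the constantly-$\fNull$ array to itself.

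The verification rests on three observations: (i) for every memory variable $m$, address $a$ and $j\in\Nat$, $m^\Bi(a).\fArr^\Bi(j)=m^\Ai(a).\fArr^\Ai(\bar g(j))$, hence in particular $m^\Bi(a).\fArr^\Bi(l^\Bi)=m^\Ai(a).\fArr^\Ai(l^\Ai)$ for every level variable $l$, since $\bar g(g(l^\Ai))=l^\Ai$ as $l^\Ai\neq n$; (ii) $\Psi$ leaves the \sElem- and \sOrd-components of a cell untouched, so node keys are unchanged; (iii) $g$ is monotone, injective on $\Nat\setminus\{n\}$, and $g(l^\Ai+1)=g(l^\Ai)+1$ whenever $l^\Ai+1\neq n$. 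With these, one checks $\Bi\models\ell$ for each normalized literal $\ell$ of (the conjunction) $\varphi$: literals over $\{\sAddr,\sElem,\sOrd,\sPath,\sSet\}$ hold by agreement; $l_1\neq l_2$ and $l_1<l_2$ hold by (iii) together with $l_i^\Ai\neq n$; there is no literal $l=q$ because $\varphi$ has no constants, and a sanitation literal $\lnew=l+1$ transfers because $\lnew^\Ai\neq n$ forces $l^\Ai+1\neq n$; and the cell/memory/array literals ($c=\fRd(m,a)$, $m_2=\fUpd(m_1,a,c)$, $c=\fMkcell(e,k,A,l)$, $a=\fArrayRd{A}{l}$, $B=\fArrayUpd{A}{l}{a}$, $c=\fError$, $a=\fNull$) follow by a short case split on $j<n$ vs.\ $j>n$, since $\Phi$ and $\Psi$ commute with $\fRd,\fUpd,\fMkcell,\fArrayRd,\fArrayUpd$ at every index $\neq n$ and all indices named in $\varphi$ are $\neq n$.

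The reachability literals are where the work lies. From (i), the argument of Lemma~\ref{lem:reachSame} — whose proof only uses agreement of the arrays at the relevant level, not that the level is named — gives $\pReach^\Bi(m^\Bi,a,a',j,p)$ iff $\pReach^\Ai(m^\Ai,a,a',\bar g(j),p)$ for every $j\in\Nat$; hence $\fGetp^\Bi(m^\Bi,a,a',j)=\fGetp^\Ai(m^\Ai,a,a',\bar g(j))$ and $\fAddrToSet^\Bi(m^\Bi,a,j)=\fAddrToSet^\Ai(m^\Ai,a,\bar g(j))$, which (the level arguments being variables, so $\bar g(g(l^\Ai))=l^\Ai$) settles the $\fGetp$ and $\fAddrToSet$ literals, while $\pOrdList(m,p)$ transfers by (ii). For the $\pSkiplist$ literal, unfold its definition: the ordered-list and $\fAddrToSet$ conjuncts live at level $0<n$ and transfer verbatim, $m(a_e).\fArr(i)=\fNull$ transfers by (i), and $m(a).\fMax\leq l$ transfers by monotonicity of $g$ (with the top level shifted by $g$). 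The remaining conjunct is the descending chain $\fPathToSet(\fGetp(m,a_i,a_e,i+1))\subseteq\fPathToSet(\fGetp(m,a_i,a_e,i))$ for $i=0,\dots,l_p$; writing $P_j$ for $\fPathToSet(\fGetp(m^\Ai,a_i^\Ai,a_e^\Ai,j))$, deleting level $n$ turns the $\Ai$-chain $\cdots\subseteq P_{n+1}\subseteq P_n\subseteq P_{n-1}\subseteq\cdots$ into the chain required in $\Bi$, whose only link that is not literally an $\Ai$-link — the one joining the old levels $n+1$ and $n-1$ — is recovered from $P_{n+1}\subseteq P_n$ and $P_n\subseteq P_{n-1}$ by transitivity of $\subseteq$. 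This transitivity bridge over the deleted level is the one real obstacle; everything else is index bookkeeping.

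Finally, $g$ restricts to a bijection from the gaps of $\Ai$ other than $n$ onto the gaps of $\Bi$, with inverse $\bar g$: this follows from $g\circ\bar g=\mathrm{id}$, strict monotonicity of $\bar g$, and the fact that $g$ and $\bar g$ are mutually inverse between $\{l^\Ai:l\in V_\sLevel(\varphi)\}$ and $\{l^\Bi:l\in V_\sLevel(\varphi)\}$. Hence $\Bi$ has exactly one fewer gap than $\Ai$, and by construction $l^\Bi=g(l^\Ai)$ equals $l^\Ai-1$ when $l^\Ai>n$ and $l^\Ai$ when $l^\Ai<n$, as claimed.
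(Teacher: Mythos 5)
Your construction is the same as the paper's: you delete the gap level $n$ via reindexing maps (your $g$, $\bar g$, $\Phi$, $\Psi$ are exactly the paper's $\beta_\sLevel$, $\beta_\sArray$, $\beta_\sCell$, $\beta_\sMem$), keep the sorts $\sAddr,\sElem,\sOrd,\sPath,\sSet$ fixed, and verify the normalized literals case by case, using the Lemma~\ref{lem:reachSame}-style preservation of $\pReach$ for the $\fGetp$, $\fAddrToSet$ and $\pSkiplist$ literals. The proposal is correct, and it even makes explicit two points the paper's own proof leaves implicit: the transitivity bridge over the deleted level in the $\pSkiplist$ subset chain, and the bijection between the remaining gaps of $\Ai$ and the gaps of $\Bi$ that justifies the gap count.
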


\begin{proof} (Sketch)
  We show here the construction of the model and leave the exhaustive
  case analysis of each literal for the appendix.  Let $\Ai$ be a
  model of $\varphi$ with a gap at $n$. We build a model $\Bi$ with
  the condition in the lemma as follows. $\Bi$ agrees with $\Ai$ on
  $\sAddr,\sElem,\sOrd,\sPath,\sSet$. In particular, $v^\Bi=v^\Ai$ for
  variables of these sorts.  For the other sorts we let
  $\Bi_\sigma=\Ai_\sigma$ for
  $\sigma=\sLevel,\sArray,\sCell,\sMem$. 
  We define the following transformation maps:
%

 \[ \begin{array}{rcl@{\hspace{1em}}@{\hspace{1em}}rcl}
   \beta_\sLevel(j) &=& \begin{cases}
     j     & \text{if $j<n$}\\
     j - 1 & \text{otherwise}
   \end{cases} &
   \beta_\sArray(A)(i) &=&
   \begin{cases}
     A(i) & \text{if $i<n$}\\
     A(i+1) & \text{if $i\geq{}n$}\\
   \end{cases} \\
   \beta_\sCell((e,k,A,l)) &=& (e,k,\beta_\sArray(A),\beta_\sLevel(l)) &
   \beta_\sMem(m)(a) &=& \beta_\sCell(m(a))
 \end{array} \]
  
  Now we are ready to define the valuations of variables $l:\sLevel$,
  $A:\sArray$, $c:\sCell$ and $m:\sMem$:
  \[ l^\Bi=\beta_\sLevel(l^\Ai) \hspace{2.3em}
     A^\Bi=\beta_\sArray(A^\Ai) \hspace{2.3em}
     c^\Bi=\beta_\sCell(c^\Ai)  \hspace{2.3em}
     m^\Bi=\beta_\sMem(m^\Ai)   
  \]
  The interpretation of all functions and predicates is preserved from
  $\Ai$. An exhaustive case analysis on the normalized literals allows
  to show that $\Bi$ is indeed a model of $\varphi$.
\qed
\end{proof}

For instance, considering formula \exPsiSanit and model $\Ai_{\psi}$,
we can construct model $\Bi_{\psi}$ reducing one gap from $\Ai_{\psi}$
by stating that $\inter{i}{\Bi_{\psi}} = \inter{i}{\Ai_{\psi}}$,
$\inter{\lnew}{\Bi_{\psi}} = \inter{\lnew}{\Ai_{\psi}}$ and
$\inter{l}{\Bi_{\psi}} = 2$, and completely ignoring arrays in model
$\Ai_{\psi}$ at level $2$.

\begin{lemma}[Top-reduction]
  Let $\Ai$ be a model of $\varphi$, and $n$ a level such that
  $n>l^\Ai$ for all $l\in V_\sLevel(\varphi)$ and $A\in\sArray^\Ai$ be
  such that $A(n)\neq\fNull$. Then the interpretation $\Bi$ obtained
  by replacing $A(n)=\fNull$ is also a model of $\varphi$.
  \label{lem:topReduction}
\end{lemma}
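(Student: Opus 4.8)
(Plan.)
My plan is to mirror the construction in the proof of Lemma~\ref{lem:gapReduction}, but with a simpler transformation. Since $n$ lies strictly above the valuation of every level variable of $\varphi$, no level and no level variable has to move: it suffices to overwrite the content of arrays at level $n$ with $\fNull$ and to propagate this change through cells and memory (this propagation is forced, since an array value occurs as the field of cells, which in turn occur in memories). I would let $\Bi$ agree with $\Ai$ on the sorts $\sLevel,\sAddr,\sElem,\sOrd,\sPath,\sSet$ — hence on every variable, function and predicate over those sorts, so in particular $l^\Bi=l^\Ai<n$ for all $l\in V_\sLevel(\varphi)$ — and keep the standard interpretations of the remaining sorts and of all built-in symbols. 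I would then define $\gamma_\sArray(A)$ as the array that coincides with $A$ except that position $n$ holds $\fNull$; $\gamma_\sCell$ as the map applying $\gamma_\sArray$ to the array field of a cell and (see below) truncating its $\fMax$ field so that it stays strictly below $n$; and $\gamma_\sMem$ as the pointwise application of $\gamma_\sCell$. Finally I would set $A^\Bi=\gamma_\sArray(A^\Ai)$, $c^\Bi=\gamma_\sCell(c^\Ai)$ and $m^\Bi=\gamma_\sMem(m^\Ai)$ for array, cell and memory variables. Because the sort interpretations and the built-in symbols are unchanged, $\Bi$ is again a $\TSL$-interpretation, and routine invariants such as $\gamma_\sCell(\fError^\Ai)=\fError^\Bi$ and $m^\Bi(\fNull)=\fError^\Bi$ hold.

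Next I would check $\Bi\models\varphi$ by the usual case analysis over the normalized literals of Definition~\ref{def:normalized-literals}. The mechanism is that $\gamma$ commutes with every built-in operation away from level $n$: array read and array update at a level $\ne n$, $\fMkcell$ and the cell selectors, and $\fRd$ and $\fUpd$ pointwise. Since every level variable of $\varphi$ evaluates in both structures to a level strictly below $n$, the literals $a=\fArrayRd{A}{l}$, $B=\fArrayUpd{A}{l}{a}$, $c=\fMkcell(e,k,A,l)$, $c=\fRd(m,a)$, $c=\fError$ and $m_2=\fUpd(m_1,a,c)$ are preserved, and the literals over $\sElem,\sOrd,\sLevel,\sSet,\sPath$ are preserved because $\Bi$ agrees with $\Ai$ on those sorts. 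For the reachability-flavoured literals $s=\fAddrToSet(m,a,l)$, $p=\fGetp(m,a_1,a_2,l)$, $\pOrdList(m,p)$ and $\pSkiplist(m,s,a_1,a_2)$ I would invoke Lemma~\ref{lem:reachSame}: $\Ai$ and $\Bi$ agree on $\sAddr,\sElem,\sOrd,\sPath,\sSet$, and for all $l\in V_\sLevel(\varphi)$, $m\in V_\sMem(\varphi)$, $a\in\sAddr^\Ai$ one has $m^\Ai(a).\fArr^\Ai(l^\Ai)=m^\Bi(a).\fArr^\Bi(l^\Bi)$, since $l^\Ai=l^\Bi\ne n$ is exactly the entry the transformation leaves untouched. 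Hence $\pReach$, and with it $\fGetp$ and $\fAddrToSet$, is preserved; $\pOrdList$ is preserved because it depends only on the key fields, which $\gamma_\sCell$ leaves alone; and $\pSkiplist$, being a boolean combination of these together with the $\fMax$ fields and with $\fArr$ consulted only at levels bounded by the skiplist height, follows.

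The bookkeeping above is a near-verbatim adaptation of Lemma~\ref{lem:gapReduction} with the identity level map, so I expect the only genuinely delicate case to be $\pSkiplist$. One has to argue that unfolding $\pSkiplist(m,s,a_1,a_2)$ never forces the evaluation of an array entry at level $n$: its definition consults $\fArr(\cdot)$ only at levels $0,\dots,l$ where $l$ is the witnessed height, and it also imposes $m(a).\fMax\le l$ for every $a$ in the region. If $l$ is controlled by the level variables of $\varphi$ this is immediate from the hypothesis $n>l^\Ai$ for all $l\in V_\sLevel(\varphi)$; if instead the height is existentially quantified inside the predicate, one re-witnesses it by the largest relevant level below $n$, and it is precisely here that truncating the $\fMax$ fields inside $\gamma_\sCell$ is needed — the truncation affects neither any $\fMkcell$-literal (whose $\fMax$ argument is already a level variable below $n$) nor any $\fArr$ entry below $n$, so every conjunct in the body of $\pSkiplist$ transfers between $\Ai$ and $\Bi$. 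Apart from pinning down this boundedness argument, the proof is routine.
\qed
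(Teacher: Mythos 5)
Your plan is correct and is essentially the paper's own proof: the paper likewise disposes of this lemma by a case analysis on the normalized literals using Lemma~\ref{lem:reachSame}, the key observation being that every level variable of $\varphi$ evaluates strictly below $n$, so no literal (nor the unfolding of $\pSkiplist$, whose levels are bounded by its level argument) ever consults the modified entry. One small remark: the truncation of the $\fMax$ field in $\gamma_\sCell$ is unnecessary, and strictly speaking departs from the $\Bi$ described in the statement, which only nulls the array entry at level $n$ --- the only constraints on $\fMax$ in normalized literals are the equality forced by $c=\fMkcell(e,k,A,l)$, where $l^\Ai<n$ already, and the upper bound $m(a).\fMax\leq l$ inside $\pSkiplist$, both preserved with the field left untouched; the internal existential $l_p$ of $\pSkiplist$ is forced to be $l-1<n$, so no re-witnessing is needed.
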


\begin{proof}
  By a simple case analysis on the literals of $\varphi$,
  using Lemma~\ref{lem:reachSame}.
\qed
\end{proof}

\begin{corollary}
  Let $\varphi$ be a sanitized formula without constants. Then,
  $\varphi$ has a model if and only if $\varphi$ has a gapless model.
  \label{cor:gapless}
\end{corollary}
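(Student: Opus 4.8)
The plan is to derive this directly from Lemmas~\ref{lem:gapReduction} and~\ref{lem:topReduction}. The right-to-left direction is immediate, since a gapless model of $\varphi$ is in particular a model of $\varphi$. For the forward direction I would start from an arbitrary model $\Ai$ of $\varphi$ and transform it into a gapless one in two stages: first remove all gaps, then remove all ``high'' non-$\fNull$ array entries.

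\emph{Removing gaps.} Since $V_\sLevel(\varphi)$ is finite, only finitely many naturals can lie strictly between the values assigned to two level variables, so $\Ai$ has finitely many gaps. As $\varphi$ is sanitized and without constants, Lemma~\ref{lem:gapReduction} applies, and each invocation returns a model of $\varphi$ with exactly one fewer gap; iterating finitely many times yields a model $\Ai'$ of $\varphi$ with no gaps. Put $N := \max\{\,l^{\Ai'} : l \in V_\sLevel(\varphi)\,\}$.

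\emph{Removing high entries.} It remains to make every array equal to $\fNull$ at every level above $N$. I would build $\Bi$ from $\Ai'$ by preserving the interpretation of $\sAddr,\sElem,\sOrd,\sPath,\sSet$ and of every variable of those sorts, and by replacing each array by its truncation at level $N$ (unchanged at levels $\le N$, $\fNull$ above $N$), updating cells and memory accordingly; every level variable keeps its value, which is $\le N$. To see that $\Bi$ is a model of $\varphi$ I would run the same literal-by-literal check as in the proof of Lemma~\ref{lem:topReduction}: for the reachability and path literals ($\pReach$, $\fGetp$, $\fAddrToSet$, etc.) Lemma~\ref{lem:reachSame} applies, because $\Ai'$ and $\Bi$ agree on $\sAddr,\sElem,\sOrd,\sPath,\sSet$ and on $m(a).\fArr(l)$ for every level variable $l$ (all evaluated $\le N$, where truncation changes nothing); the remaining normalized literals either do not inspect array contents at all, or inspect them only at levels bounded by the value of some level variable, hence again $\le N$. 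Thus $\Bi$ is a gapless model of $\varphi$.

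The only subtle point is this last stage, where infinitely many array positions are set to $\fNull$ at once, so one cannot simply conclude by finitely many separate applications of Lemma~\ref{lem:topReduction}. What makes the simultaneous truncation harmless is precisely that no literal of $\varphi$ constrains the contents of an array at a level above $N$ --- Lemma~\ref{lem:reachSame} packages exactly this fact for the reachability predicates, and it is routine for the other normalized literals. I therefore expect the write-up to be short, the bulk of the (already discharged) work residing in Lemmas~\ref{lem:gapReduction}, \ref{lem:topReduction} and~\ref{lem:reachSame}.
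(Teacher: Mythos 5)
Your proposal is correct and takes essentially the same route the paper intends: the corollary carries no separate proof precisely because it is meant to follow by iterating Lemma~\ref{lem:gapReduction} to remove the finitely many gaps and then invoking the Top-reduction argument to null out array entries above the highest level variable. Your additional care about performing the truncation simultaneously (rather than by finitely many applications of Lemma~\ref{lem:topReduction}), justified literal by literal via Lemma~\ref{lem:reachSame}, is a sound tightening of the paper's implicit appeal to that lemma and changes nothing essential in the argument.
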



\StepTwo in the decision procedure guesses an order arrangement of level
variables from the sanitized formula $\varphi$. Informally, an
order arrangement is a total order between the equivalence classes of
level variables.  

\begin{definition}[Order Arrangement]
  Given a sanitized formula $\varphi$, an order arrangement is a collection
  of literals containing, for every pair of level variables
  $l_1,l_2\in V_\sLevel(\varphi)$, exactly one of:
%
%
%
  \( (l_1 = l_2), \hspace{1em} (l_1 < l_2), \hspace{1em} \text{or} \hspace{1em} (l_2 < l_1). \)

\end{definition}

For instance, an order arrangement of \exPsiSanit is $\{ i < \lnew, i < l,
\lnew < l \}$.
As depicted in Fig.~\ref{fig:decprocTSL} (right), \StepThree of the
decision procedure splits the sanitized formula $\varphi$ into
$\varphiPA$, which contains precisely all those literals in the theory
of arithmetic $\sigLevel$, and $\varphiNC$ containing all literals
from $\varphi$ except those involving constants $(l=q)$.  Clearly,
$\varphi$ is equivalent to $\varphiNC \And \varphiPA$.
In our case, \exPsiSanit is split into $\exPsiPA$ and $\exPsiNC$:
\[ \begin{array}{rcl}
\exPsiPA &\;\;\;:\;\;\;& i = 0 \land l = 3 \land \lnew = i + 1 \\
\exPsiNC &\;\;\;:\;\;\;&
\begin{pmatrix}
  c = \fRd(\heap, \head) \;\land \;\\
  c = \fMkcell(e, k, A, l)
\end{pmatrix} \land B = \fArrayUpd{A}{i}{\tail} \land \lnew = i + 1.
\end{array}
\]
For a given formula there is only a finite collection of order
arrangements satisfying $\varphiPA$.  We use $\arr(\varphiPA)$ for the
set of order arrangements of variables satisfying $\varphiPA$.  A model of
$\varphiPA$ is characterized by a map $f:V_\sLevel(\varphi)\Into\Nat$
assigning a natural number to each level variable. In the case of
\exPsiPA, $f$ maps $i$, $\lnew$ and $l$ to $0$, $1$ and $3$
respectively. Also, for every model $f$ of $\varphiPA$ there is a
unique order arrangement $\alpha\in\arr(\varphiPA)$ for which $f\models
\alpha$. \StepFour consists of checking whether there is a model of
$\varphiPA$ that corresponds to a given order  arrangement $\alpha$ by simply
checking the satisfiability of the Presburger arithmetic formula
$(\varphiPA \And \alpha)$.


We are now ready to show that the guess in \StepTwo and the split in
\StepThree preserve satisfiability.  Theorem~\ref{thm:NoConstants}
below allows to reduce the satisfiability of $\varphi$ to the
satisfiability of a Presburger Arithmetic formula and the
satisfiability of a \TSL formula without constants. We show in the
next section how to decide this fragment of \TSL.

\newcounter{thm-noconstants}
\setcounter{thm-noconstants}{\value{theorem}}

\begin{theorem}\label{thm:NoConstants}
  A sanitized \TSL formula $\varphi$ is satisfiable if and only if for
  some order arrangement $\alpha$, both $(\varphiPA \And \alpha)$ and
  $(\varphiNC \And \alpha)$ are satisfiable.
\end{theorem}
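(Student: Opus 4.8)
The plan is to prove the two implications separately. The ($\Rightarrow$) direction is immediate: if $\Ai\models\varphi$ then, since $\varphi$ is logically equivalent to $\varphiNC\And\varphiPA$, $\Ai$ models both conjuncts; the restriction of $\Ai$ to $V_\sLevel(\varphi)$ is a level assignment that satisfies exactly one order arrangement $\alpha$, and this $\alpha$ works because then $\Ai\models(\varphiPA\And\alpha)$ and $\Ai\models(\varphiNC\And\alpha)$.

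For ($\Leftarrow$), fix an $\alpha$ for which both conjunctions are satisfiable. Let $f$ be a model of $(\varphiPA\And\alpha)$, viewed as a map $f:V_\sLevel(\varphi)\Into\Nat$. Since $(\varphiNC\And\alpha)$ is a sanitized formula without constants, Corollary~\ref{cor:gapless} lets us pick a gap-less model $\Ai$ of it. As $f$ and $\Ai$ both satisfy the total order arrangement $\alpha$, they induce the same linear order on $V_\sLevel(\varphi)$; together with gap-lessness of $\Ai$, this means $\{l^\Ai : l\in V_\sLevel(\varphi)\}$ is a contiguous block of naturals and there is an order-preserving bijection from it onto $\{f(l) : l\in V_\sLevel(\varphi)\}$ sending $l^\Ai$ to $f(l)$. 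I would then define a monotone map $\rho:\Nat\Into\Nat$ that inverts this bijection, sending every level outside its range to the closest level below, and build an interpretation $\Bi$ that (i) agrees with $\Ai$ on the sorts $\{\sAddr,\sElem,\sOrd,\sPath,\sSet\}$ and on all functions and predicates over them, (ii) interprets each level variable $l$ by $f(l)$, and (iii) obtains its arrays, cells and memory from those of $\Ai$ by precomposing the array component with $\rho$ on indices up to the largest $f(l)$ and returning $\fNull$ on larger indices. This is precisely the transformation of Lemma~\ref{lem:gapReduction} run in reverse---inserting gaps rather than removing them.

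It remains to check $\Bi\models\varphi$, i.e.\ $\Bi\models\varphiPA$ and $\Bi\models\varphiNC$. The former holds because $\varphiPA$ consists only of $\sLevel$-arithmetic literals and $\Bi$ assigns the level variables by $f$, which models $\varphiPA$. For $\varphiNC$ I would run an exhaustive case analysis on normalized literals (leaving the routine cases for the appendix): literals over $\{\sAddr,\sElem,\sOrd,\sPath,\sSet\}$ are preserved verbatim; literals mentioning $\sCell$ or $\sMem$ ($\fRd$, $\fUpd$, $\fMkcell$, $\fError$, $\fNull$) are preserved because $\Bi$'s cell and memory structure is the $\rho$-image of $\Ai$'s; and the reachability and shape literals ($\fGetp$, $\fAddrToSet$, $\pOrdList$, $\pSkiplist$, $\pAppend$) are preserved via Lemma~\ref{lem:reachSame}, whose hypothesis $m^\Bi(a).\fArr^\Bi(l^\Bi)=m^\Ai(a).\fArr^\Ai(l^\Ai)$ holds because $\rho(f(l))=l^\Ai$, with monotonicity of $\rho$ dealing with the level comparisons (such as the $\fMax\leq$ conditions) inside $\pSkiplist$. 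The delicate case---and the only place sanitation is used---is the array update $B=\fArrayUpd{A}{l}{a}$: in $\Bi$ we need $B^\Bi$ and $A^\Bi$ to differ at the single index $f(l)$, but a careless $\rho$ that duplicated the level $l^\Ai$ while filling a gap just above it would make them differ at two indices. This is exactly where the companion literal $\lnew=l+1$ guaranteed by sanitation enters: from $f\models\varphiPA$ we get $f(\lnew)=f(l)+1$, so $\lnew$ is the immediate $\alpha$-successor of $l$; hence $\lnew^\Ai=l^\Ai+1$, the bijection maps $l^\Ai+1$ to $f(l)+1$, no gap is ever inserted directly above level $l$, the fibre $\rho^{-1}(l^\Ai)$ equals $\{f(l)\}$, level $l$'s array entry is copied rather than duplicated, and the update literal is preserved (and $a=\fArrayRd{A}{l}$ follows likewise). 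I expect this array-update/sanitation interaction to be the main obstacle; the rest is bookkeeping about monotone maps.
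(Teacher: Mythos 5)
Your proposal is correct and follows essentially the same route as the paper's proof: the forward direction is the same triviality, and your backward direction rebuilds the paper's construction of a merged model from a gapless model of $(\varphiNC\And\alpha)$ and an arithmetic model $f$ of $(\varphiPA\And\alpha)$, with your monotone map $\rho$ playing exactly the role of the paper's $\finv(n)=\max\{k\;|\;f(k)\leq n\}$ used to replicate levels. You also identify precisely the same crux the paper does: the sanitation literal $\lnew=l+1$ forces $\finv$ not to duplicate the level $l$ of an array update $B=\fArrayUpd{A}{l}{a}$, which is the only place sanitation is needed.
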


\mysubsection{4.3 \StepFour: Presburger Constraints}

The formula $\varphiPA$ contains only literals of the form $l_1=q$,
$l_1\neq l_2$, $l_1=l_2+1$, and $l_1<l_2$ for integer variables $l_1$ and $l_2$
and integer constant $q$. The satisfiability of this kind of formulas
can be easily decided with off-the-shelf SMT solvers. If $\varphiPA$ is
unsatisfiable then the original formula (for the guessed order
arrangement) is also unsatisfiable.

\subsection{\StepFive: Deciding Satisfiability of Formulas Without Constants}

We show here the correctness of the reduction of the satisfiability of
a sanitized formula without constants to the satisfiability of a
formula in the decidable theory \TSLK (\StepFive).
That is, we detail how to generate from a sanitized formula
without constants $\psi$ (formula $(\varphi\;\land\;\alpha)$ in
Fig.~\ref{fig:decprocTSL}) an equisatisfiable \TSLK formula $\trpsi$
for a finite value $\K$ computed from the formula. The bound is
$\K=|\TVar{\sLevel}(\psi)|$. This bound limits the number of levels
required in the reasoning.  We use $[\K]$ as a short for the set
$0\ldots \K-1$. For \exPsiSanit, we have $\K = 3$ and thus we construct a 
formula in $\TSL_{3}$.

The translation from $\psi$ into $\trpsi$ works as follows.  For every
variable $A$ of sort \sArray appearing in some literal in $\psi$ we
introduce $\K$ fresh new variables $v_{\fArrayRd{A}{0}}, \ldots,
v_{\fArrayRd{A}{\K-1}}$ of sort $\sAddr$. These variables correspond
to the addresses from $A$ that the decision procedure for $\TSLK$
needs to reason about.  All literals from $\psi$ are left unchanged in
$\trpsi$ except $(c = \fMkcell(e,k,A,l))$, $(a = \fArrayRd{A}{l})$, 
$(B=\fArrayUpd{A}{l}{a})$, $B=A$ and $\pSkiplist(m, s, a_1, a_2)$
that are changed as follows:
\begin{compactitem}
\item $c=\fMkcell(e,k,A,l)$ is transformed into 
  $c=(e,k,v_{\fArrayRd{A}{0}},\ldots,v_{\fArrayRd{A}{K-1}})$.
%
\item $a=\fArrayRd{A}{l}$ gets translated into:
%
\(
    \bigwedge\limits_{i=0\ldots\K-1}l=i \Impl a=v_{\fArrayRd{A}{i}}.
\)
%
\item $B = \fArrayUpd{A}{l}{a}$ is translated into:
\begin{equation}\label{eq:transBeqAl}
  \big(\bigwedge\limits_{i=0 \ldots \K-1} l=i \Impl a =v_{\fArrayRd{B}{i}}\big) \;\land\;
\big(\bigwedge\limits_{j=0 \ldots \K-1} l\neq j \Impl v_{\fArrayRd{B}{j}} = v_{\fArrayRd{A}{j}}\big) 
\end{equation}

\item $\pSkiplist(m,r,a_1,a_2)$ gets translated into:
\begin{equation}
\hspace{-3em}	\begin{array}{lll}
		& \pOrdList(m, \fGetp(m, a_1, a_2, 0)) 
                \;\;\;\land\;\;\;
                r = \fPathToSet(\fGetp(m, a_1, a_2, 0)) & \land \\
		\displaystyle \bigwedge_{i \in 0 \ldots \K-1} &
			\fArrayRd{\fRd(m, a_2).\fArr}{i} = \fNull & \land \\
		\displaystyle \bigwedge_{i \in 0 \ldots \K-2} &
			\fPathToSet(\fGetp(m,a_1,a_2,i+1)) \subseteq
			\fPathToSet(\fGetp(m,a_1,a_2,i))
	\end{array}
\label{eq:transSkiplist}
\end{equation}
\end{compactitem}

Note that the formula $\toTSLK{\varphi}$ obtained using this
translation belongs to the theory $\TSLK$.
For instance,
\[
	\begin{array}{lcl}
		\toTSLK{\exPsiNC} :
			\begin{bmatrix}
				\begin{array}{lclclc}
					i = 0 \Impl \tail = v_{\fArrayRd{B}{0}} & \land &
					i = 1 \Impl \tail = v_{\fArrayRd{B}{1}} & \land &
					i = 2 \Impl \tail = v_{\fArrayRd{B}{2}} & \land \\
					i \neq 0 \Impl v_{\fArrayRd{B}{0}} = v_{\fArrayRd{A}{0}} & \land &
					i \neq 1 \Impl v_{\fArrayRd{B}{1}} = v_{\fArrayRd{A}{1}} & \land &
					i \neq 2 \Impl v_{\fArrayRd{B}{2}} = v_{\fArrayRd{A}{2}} & \land \\
					c = \fRd(\heap, \head) & \land &
					\multicolumn{4}{l}{c = \fMkcell(e, k, v_{\fArrayRd{A}{0}},
																								v_{\fArrayRd{A}{1}},
																								v_{\fArrayRd{A}{2}}) \; \land
															\lnew = i + 1}
				\end{array}
			\end{bmatrix}
	\end{array}
\]
The following lemma establishes the correctness of the translation.

\newcounter{lem-tslIFFtslk}
\setcounter{lem-tslIFFtslk}{\value{lemma}}
\begin{lemma}
  Let $\psi$ be a sanitized $\TSL$ formula with no
  constants. Then, $\psi$ is satisfiable if and only if
  $\toTSLK{\psi}$ is also satisfiable.
  \label{lem:tslIFFtslk}
\end{lemma}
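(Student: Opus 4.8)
The plan is to prove the two directions of the equivalence by exhibiting explicit model translations back and forth between $\TSL$ and $\TSLK$ for $\K=|\TVar{\sLevel}(\psi)|$. In both directions the sorts $\sAddr,\sElem,\sOrd,\sPath,\sSet$ (and their functions/predicates) are preserved verbatim, so the only work is to relate the interpretations of the sorts $\sLevel,\sArray,\sCell,\sMem$ and to check that each of the rewritten literal forms is satisfied on one side exactly when its translation is satisfied on the other.

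\emph{From $\TSLK$ to $\TSL$ (if $\toTSLK{\psi}$ is satisfiable, so is $\psi$).} Let $\Bi$ be a $\TSLK$-model of $\toTSLK{\psi}$. Since $\TSLK$ fixes the number of levels to $\K=|\TVar{\sLevel}(\psi)|$, I would first observe that by Corollary~\ref{cor:gapless} we may assume $\Bi$ is gapless, i.e. the level variables of $\psi$ take exactly the values $0,\dots$ up to their arrangement bound and all arrays return $\fNull$ above; then I build a $\TSL$-model $\Ai$ whose $\sLevel$ carrier is $\Nat$, interpreting each level variable by the same natural number it had in $\Bi$, and whose arrays/cells/memory are extended by padding: for each array variable $A$ the $\TSL$-array $A^\Ai$ is the function on $\Nat$ with $A^\Ai(i)=v_{\fArrayRd{A}{i}}^\Bi$ for $i<\K$ and $A^\Ai(i)=\fNull$ for $i\geq\K$; cells and memory are obtained componentwise. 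One then checks literal by literal: the unchanged literals transfer because the shared sorts are identical; for $c=\fMkcell(e,k,A,l)$, the $\TSLK$ tuple-equality $c=(e,k,v_{\fArrayRd{A}{0}},\dots,v_{\fArrayRd{A}{K-1}})$ together with the padding forces $c^\Ai=\langle e^\Ai,k^\Ai,A^\Ai,l^\Ai\rangle$; for $a=\fArrayRd{A}{l}$ and $B=\fArrayUpd{A}{l}{a}$ the guarded conjunctions pin down exactly one index $l^\Bi<\K$ (here sanitation matters: the literal $\lnew=l+1$ guarantees $l^\Bi+1\leq\K$, so $l^\Bi$ is a legal index and the $\fArrayUpd$ does not ``fall off the top''), and for $\pSkiplist$ one uses Lemma~\ref{lem:reachSame} to transfer reachability facts at each level $<\K$, plus the explicit $\fNull$ and $\subseteq$ conjuncts of \eqref{eq:transSkiplist}, matching them against the clauses in the interpretation of $\pSkiplist$ in Fig.~\ref{fig:tsl-interpretation}.

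\emph{From $\TSL$ to $\TSLK$ (if $\psi$ is satisfiable, so is $\toTSLK{\psi}$).} Let $\Ai$ be a $\TSL$-model of $\psi$. Using Corollary~\ref{cor:gapless} again, I may assume $\Ai$ is gapless; since there are at most $\K=|\TVar{\sLevel}(\psi)|$ distinct level values and no gaps, every level variable is interpreted by some value in $[\K]$ and every array mentioned in $\psi$ is $\fNull$ at all positions $\geq\K$. Then I define the $\TSLK$-model $\Bi$ by restricting each array to $[\K]$, setting $v_{\fArrayRd{A}{i}}^\Bi:=A^\Ai(i)$ for $i\in[\K]$, and restricting cells and memory correspondingly; the shared sorts are copied. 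The literal-by-literal check is the mirror image of the other direction, again with Lemma~\ref{lem:reachSame} handling the $\pReach$/$\fGetp$-based predicates inside $\pSkiplist$, and with sanitation ensuring the translated $\fArrayUpd$ literal is consistent (the index is in range).

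\emph{Main obstacle.} The routine part is the bookkeeping in the case analysis over the handful of rewritten literal forms; the genuinely delicate point is the treatment of $\pSkiplist$ and, relatedly, of $\fArrayUpd$. Getting $\pSkiplist$ right requires that the finite conjunction \eqref{eq:transSkiplist} over levels $0,\dots,\K-1$ faithfully captures the interpretation of $\pSkiplist$ in Fig.~\ref{fig:tsl-interpretation}, which quantifies over all levels up to $m(a_e).\fMax$; the argument that nothing is lost above level $\K-1$ is exactly where gaplessness (Corollary~\ref{cor:gapless}) and the bound $\K=|\TVar{\sLevel}(\psi)|$ are essential, and where sanitation (the fresh $\lnew=l+1$ literals) is needed so that an $\fArrayUpd$ never refers to a level that the truncated model has discarded. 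I expect the write-up to spend most of its length discharging these two cases and to dispatch the rest quickly.
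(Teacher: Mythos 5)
Your proposal follows essentially the same route as the paper: it splits the equivalence into two directions and proves each by an explicit model translation (truncating arrays/cells to the $\K$ relevant levels via the fresh variables $v_{\fArrayRd{A}{i}}$ in one direction, padding with $\fNull$ above level $\K$ in the other), preserving the shared sorts $\sAddr,\sElem,\sOrd,\sPath,\sSet$ and doing a literal-by-literal check with the delicate cases being $\fArrayUpd$, reachability-based functions and $\pSkiplist$, with Corollary~\ref{cor:gapless} supplying the gapless model in the $\TSL\Rightarrow\TSLK$ direction exactly as in the paper's Lemmas~\ref{lem:tsl2tslk} and~\ref{lem:tslk2tsl}. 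The only cosmetic deviations are your invocation of gaplessness in the $\TSLK\Rightarrow\TSL$ direction (unnecessary there, since padding alone suffices) and attributing the role of sanitation to index-range safety in this lemma (the paper uses sanitation mainly via the gap-reduction machinery and in Theorem~\ref{thm:NoConstants}); neither affects correctness.
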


The main result of this paper is the following decidability theorem,
which follows immediately from Lemma~\ref{lem:tslIFFtslk},
Theorem~\ref{thm:NoConstants} and the fact that every formula can be
normalized and sanitized.

\begin{theorem}
  The satisfiability problem of \textup{(QF)} \TSL-formulas is
  decidable.
\end{theorem}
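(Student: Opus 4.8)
The plan is to show that the procedure of Fig.~\ref{fig:decprocTSL}, run over all disjuncts and all order arrangements, is a sound, complete and terminating algorithm for the satisfiability of quantifier-free \TSL formulas, so that the theorem falls out by assembling the results already established in this section. First I would reduce an arbitrary quantifier-free \TSL formula to conjunctions of literals: putting it in disjunctive normal form yields $\varphi_1\lor\cdots\lor\varphi_n$, and since satisfiability distributes over disjunction it suffices to decide each $\varphi_i$. By the normalization lemma above, each $\varphi_i$ is equivalent to a finite collection of conjunctions of normalized \TSL-literals, and sanitation (\StepOne) turns each such conjunction into a sanitized formula $\varphi$ without affecting satisfiability, since it only adds constraints of the form $\lnew = l+1$ for fresh variables $\lnew$.

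Next I would invoke Theorem~\ref{thm:NoConstants}: $\varphi$ is satisfiable if and only if, for some order arrangement $\alpha$ of the finitely many variables in $V_\sLevel(\varphi)$, both $(\varphiPA\land\alpha)$ and $(\varphiNC\land\alpha)$ are satisfiable. As there are only finitely many order arrangements, the procedure enumerates them (\StepTwo--\StepThree). For each $\alpha$, the arithmetic conjunct $(\varphiPA\land\alpha)$ contains only literals $l_1=q$, $l_1\neq l_2$, $l_1=l_2+1$ and $l_1<l_2$, so its satisfiability is decidable by any linear integer arithmetic solver (\StepFour); and the constant-free conjunct $(\varphiNC\land\alpha)$ is, by Lemma~\ref{lem:tslIFFtslk}, equisatisfiable with the \TSLK formula $\toTSLK{\varphiNC\land\alpha}$ for the finite bound $\K=|\TVar{\sLevel}(\varphiNC\land\alpha)|$ read off from the formula, and satisfiability of quantifier-free \TSLK formulas is decidable by earlier work~\cite{sanchez11theory} (\StepFive). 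Termination is immediate: every branch of the search is finite (finitely many DNF disjuncts, finitely many normalized collections, finitely many order arrangements), and each leaf calls a terminating sub-procedure.

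The step needing the most care is not this final assembly, which is routine once the ingredients are in place, but the two results it rests on, already proved above. Theorem~\ref{thm:NoConstants} is the crux: its nontrivial direction must reconstruct a model of the full formula $\varphi$ from a model of its constant-free part together with a Presburger solution, and for that it relies on the gap-reduction and top-reduction lemmas (Lemmas~\ref{lem:gapReduction} and~\ref{lem:topReduction}) to collapse a model to a gapless one and then re-stretch it to the level values dictated by $\alpha$. Lemma~\ref{lem:tslIFFtslk} in turn depends on the observation that $\K=|\TVar{\sLevel}(\psi)|$ levels suffice, i.e. that arrays need only be tracked at the finitely many levels named by variables, which is exactly what the translation of $\fMkcell$, $\fArrayRd{\_}{\_}$, $\fArrayUpd{\_}{\_}{\_}$ and $\pSkiplist$ encodes. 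Granting these two results, the decidability theorem is immediate.
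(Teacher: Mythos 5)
Your proposal is correct and follows essentially the same route as the paper, which derives the theorem directly from Lemma~\ref{lem:tslIFFtslk}, Theorem~\ref{thm:NoConstants}, and the fact that every formula can be normalized and sanitized; you merely spell out the assembly (DNF, finitely many order arrangements, decidability of the Presburger part, and of \TSLK) that the paper leaves implicit.
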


\section{Example: Skiplist Preservation}
\label{sec:examples}

%
%

We sketch the proof that the implementation given in
Fig.~\ref{fig:skiplist-algorithms} preserves the skiplist shape
property. This is a safety property, and can be proved using
invariance: the data structure initially has a skiplist shape and
all transitions preserve this shape. This invariance proof is
automatically decomposed in the following verification conditions:
\newcommand{\Ini}{\textsc{Ini}}
\newcommand{\Consec}{\textsc{Con}}
\[
	\begin{array}{ll>{\hspace{4em}}ll}
		\text{(\Ini)}: & \Theta \Impl \pSkiplist &
		\text{(\Consec)}: & \bigwedge_{i \in 1 \ldots 79} \pSkiplist \land
										\tra{i} \Impl \pSkiplist'
	\end{array}
\]
where
$\Theta$ denotes the initial condition and $\tra{i}$ is the transition relation 
$\tau_i(V,V')$ corresponding to program line $i$, relating variables in the 
pre-state ($V$) with variables in the post-state ($V'$). Finally, \pSkiplist and 
$\pSkiplist'$ are short notation for $\pSkiplist(\heap, r, \maxLevel, \head, 
\tail)$ and $\pSkiplist(\heap', r', \maxLevel', \head', \tail')$ respectively.
%
All VCs discharged are quantifier-free \TSL formulas and thus are verifiable
using our decision procedure. We use a single value to denote the
key and value of a cell, hence a cell $(v, A, l)$ represents $(v, v,
A, l)$ and $\fRd(c)$ as a short for $\fRd(\heap, c)$.
Condition (\Ini) is easy to verify, from initial condition $\Theta$:
\[
	\begin{array}{lcl}
		\Theta & \defsym &
			\begin{bmatrix}
				\begin{array}{cccccccc}
					\fRd(\head) = c_{h} & \land &
					c_{h} = (-\infty, A_{h}, 0) & \land &
					\fArrayRd{A_{h}}{0} = \tail & \land &
					\maxLevel = 0 & \land \\
					\fRd(\tail) = c_{t} & \land &
					c_{t} = (+\infty, A_{t}, 0) & \land &
					\fArrayRd{A_{t}}{0} = \fNull & \land &
					r = \{ \head, \tail \}
				\end{array}
			\end{bmatrix}
	\end{array}
\]

\newcommand{\SLOne}{\textsc{SL1}}

To prove the validity of (\Consec), we negate it and show that $\pSkiplist 
\land \tra{i} \land \lnot \pSkiplist'$ is unsatisfiable. As shown above, 
$\lnot \pSkiplist'$ is normalized into five disjuncts. Two of them are:
\begin{inparaitem}
	\item[(NSL1)] $\big(\lnot \pOrdList(m, \fGetp(\heap, \head, \tail, 0))\big)$; and
	\item[(NSL4)] $\big(a \in \reg \land \fRd(\heap,a).\level > \maxLevel).$
\end{inparaitem}
%
%

Consider (NSL1). The only offending transition that could satisfy the
negation of the VC is \tra{36}, which connects a new cell to the
skiplist. We can automatically prove that this transition preserves
the skiplist order using the following supporting invariants:
\[
	\begin{array}{lcl}
		\phiNext & \defsym &
				\begin{pmatrix}
					\begin{array}{l}
						\pc = 21 \Impl \curr = \fRd(\pred).\fArr[\maxLevel] \; \land \\
						\pc = 22..25,27..29,60..63,65..70 \Impl
										\curr = \fRd(\pred).\fArr[i]
					\end{array}
				\end{pmatrix} \\
		\phiPredLess & \defsym & \;\:
			\pc = 20..40,59..79 \Impl
				\begin{pmatrix}
					\begin{array}{rcl}
						\fRd(\pred).\val & < & v \; \land \\
						\fRd(\pred).\val & < & \fRd(\tail).\val
					\end{array}
				\end{pmatrix} \\
		\phiOrd{j} & \defsym &
			\begin{pmatrix}
				\begin{array}{lll}
						(\pc = 22..38,60..70,73..75 &\land& i < j \leq \maxLevel)\;\lor\;
                                                \\
									(\pc = 71..72 &\land& 0 \leq j \leq \maxLevel)
				\end{array}
			\end{pmatrix} 
                        \Impl 
                        \\ & & \hspace{3em}
			\begin{pmatrix}
				\begin{array}{c}
					\fRd(\upd[j]).\val < v \; \land \\
						\fRd(\fRd(\upd[j]).\fArr[j]).\val \geq v
				\end{array}
			\end{pmatrix}
	\end{array}
\]
\noindent where \pc denotes the program counter. We use $(\pc=a..b)$ to 
denote $(\pc=a \lor \cdots \lor \pc=b)$.
Invariant \phiNext establishes that \curr points to the next cell
pointed by \pred at level $i$. Invariant \phiPredLess says that the
value pointed by \pred is always strictly lower than the value we are
inserting or removing, and the value pointed by \tail. Finally,
\phiOrd{j} establishes that when inside the loops, array \upd at level
$j$ points to the last cell whose value is strictly lower than the
value to be inserted or removed. This way, when taking \tra{36}, the
decision procedure can show that the order of elements in the
list is preserved.

Checking (NSL4) is even simpler, requiring only the following
 invariant:
\[
	\begin{array}{lcl}
		\phiBounded & \defsym &
			(\pc = 19..40 \Impl \lvl \leq \maxLevel) \land
			(\pc = 34..40 \Impl \fRd(x).\level = \lvl)
	\end{array}
\]
A similar approach is followed for all other cases of $\lnot
\pSkiplist'$.

%



\section{Conclusion and Future Work}
\label{sec:conclusion}

In this paper we have presented \TSL, a theory of skiplists of
arbitrary many levels, useful for automatically prove the VCs
generated during the verification of skiplist implementations. \TSL is
capable of reasoning about memory, cells, pointers, regions and
reachability, ordered lists and sublists, allowing the description of
the skiplist property, and the representation of memory modifications
introduced by the execution of program statements. The main novelty of
\TSL is that it is not limited to skiplists of a limited height.

%
We showed that \TSL is decidable by reducing its satisfiability
problem to \TSLK~\cite{sanchez11theory} (a decidable theory capable of
reasoning about skiplists of bounded levels) and we illustrated such
reduction by some examples. Our reduction allows to restrict the
reasoning to only the levels being explicitly accessed in the
(sanitized) formula.

Future work also includes the temporal verification of sequential and
concurrent skiplists implementations, including industrial
implementations like in the \texttt{java.concurrent} standard
library. We are currently implementing our decision procedure on top of 
off-the-shelf SMT solvers such as Yices and Z3. This implementation so 
far provides a very promising performance for the automation of skiplist 
proofs. However, reports on this empirical evaluation is future work.


\bibliographystyle{abbrv}
\bibliography{main}

\newpage
\pagebreak 
\appendix
\newcounter{backup}

\section{Missing Proofs}

\setcounter{backup}{\value{lemma}}
\setcounter{lemma}{\value{lem-normalized}}

\begin{lemma}
Every \TSL-formula is equivalent to a collection of conjunctions of
normalized \TSL-literals.
\end{lemma}

\setcounter{lemma}{\value{backup}}

\begin{proof}
  By case analysis on non-normalized literals. For illustration
  purpose we show some interesting cases only. For instance, $\lnot
  \pOrdList(m,p)$ is equivalent to:
\begin{align}
	&	(\exists l_1, l_2, \mathit{zero} : \sLevel) \;
		(\exists a_1, a_2 : \sAddr) \;
		(\exists c_1, c_2 : \sCell) \nonumber \\
	& (\exists e_1, e_2 : \sElem) \;
		(\exists k_1, k_2 : \sOrd) \;
		(\exists A_1, A_2 : \sArray) \nonumber \\
	&	\mhs a_1 \in \fPathToSet(p) \land
				 a_2 \in \fPathToSet(p) \land
				 \mathit{zero} = 0 && \land \label{eq:no_order:one} \\
	& \mhs c_1 = \fRd(m,a_1) \land
				 c_1 = \fMkcell(e_1, k_1, A_1, l_1) && \land \label{eq:no_order:two} \\
	& \mhs a_2 = \fArrayRd{A_1}{\textit{zero}} \land
				 c_2 = \fRd(m,a_2) \land
				 c_2 = \fMkcell(e_2, k_2, A_2, l_2) && \land\label{eq:no_order:three} \\
	& \mhs k_2 \pOrd k_1 \land
				 k_2 \neq k_1 \label{eq:no_order:four}
\end{align}
Conjunct (\ref{eq:no_order:one}) establishes that there are two witness
addresses $a_1$ and $a_2$ in path $p$. Literal (\ref{eq:no_order:two})
captures that $c_1$ is the cell at which $a_1$ is mapped in memory
$m$. Conjunct (\ref{eq:no_order:three}) captures that $c_2$ is the
cell next to $c_1$ on memory $m$, following pointers at level
$0$. That is, $c_2$ immediately follows $c_1$ in heap $m$. Finally,
(\ref{eq:no_order:four}) establishes that the key of $c_1$ is strictly
greater that the key of $c_2$, violating the order of the list.

As another example, consider literal $\lnot
\pSkiplist(m,r,l,a_i,a_e)$.  Based on the interpretation given in
Fig.~\ref{fig:tsl-interpretation}, this literal is equivalent to the
following:
\begin{align}
	& \left[
			\begin{array}{m{28.3em}}$
				(\exists p : \sPath) \;
					p = \fGetp(m, a_i, a_e, 0) \land
					\lnot\pOrdList(m, p)$
			\end{array}
		\right] \tag{NSL1}\label{eq:no_skiplist:one} \lor \\
	& \left[
			\begin{array}{m{28.3em}}$
				(\exists p : \sPath)
				(\exists s : \sSet) \;
					p = \fGetp(m, a_i, a_e, 0) \land
					s = \fPathToSet(p) \land
					r \neq s $
			\end{array}
		\right] \tag{NSL2}\label{eq:no_skiplist:two} \lor \\
	& \left[
			\begin{array}{m{28.3em}}$
				l < 0
			$\end{array}
		\right] \tag{NSL3}\label{eq:no_skiplist:three} \lor \\
	& \left[
			\begin{array}{m{28em}}$
				(\exists a:\sAddr)
				(\exists e:\sElem)
				(\exists k:\sOrd)
				(\exists A:\sArray)
				(\exists \tilde{l}:\sLevel)
				(\exists c:\sCell) $ \\ $
				\mhs a \in r \land
					 c = \fRd(m,a) \land
					 c = \fMkcell(e, k, A, \tilde{l}) \land
					 l < \tilde{l} $
			\end{array}
		\right] \tag{NSL4}\label{eq:no_skiplist:four} \lor 
\end{align}
\begin{align}
	& \left[
			\begin{array}{m{27.5em}} $
				(\exists a:\sAddr)
				(\exists e:\sElem)
				(\exists k:\sOrd)
				(\exists A:\sArray)
				(\exists l_1,l_2:\sLevel)$\\
				$(\exists c:\sCell) $ \\ $
				\mhs l \neq 0 \land
					 0 \leq l_2 \land
					 l_2 \leq l_1 \; \land $ \\ $
				\mhs c = \fRd(m,a_e) \land
					 c = \fMkcell(e,k,A,l_1) \land
					 a = \fArrayRd{A}{l_2} \land
					 a \neq \fNull $
			\end{array}
		\right] \tag{NSL5}\label{eq:no_skiplist:five} \lor \\
	&	\left[
			\begin{array}{m{27.5em}} $
				(\exists l_1, l_2:\sLevel)
				(\exists p_1,p_2:\sPath)
				(\exists s_1,s_2:\sSet) $ \\ $
				\mhs l \neq 0 \land
					 0 \leq l_1 \land
					 l_1 < l \land
					 l_2 = s(l_1) \; \land $ \\ $
				\mhs p_1 = \fGetp(m, a_i, a_e, l_1) \land
					 p_2 = \fGetp(m, a_i, a_e, l_2) \; \land $ \\ $
				\mhs s_1 = \fPathToSet(p_1) \land
					 s_2 = \fPathToSet(p_2) \land
					 s_1 \not\subseteq s_2 $
			\end{array}
		\right] \tag{NSL6}\label{eq:no_skiplist:six}
\end{align}
Literals such as $a \in r$, $\lnot \pOrdList(m,p)$ and $l < 0$ are not 
normalized, but we leave them
in the previous formulas for simplicity. 
\qed
\end{proof}

\setcounter{backup}{\value{lemma}}
\setcounter{lemma}{\value{lem-sanitized}}

\begin{lemma}
  Let $\Ai$ and $\Bi$ be two interpretations of a sanitized formula
  $\varphi$ that agree on $\sigma:\{\sAddr,\sElem,\sOrd,\sPath,\sSet\}$, and
  such that for every $l\in V_\sLevel(\varphi)$, $m\in
  V_\sMem(\varphi)$, and $a\in \sAddr^\Ai$:
\(
    m^\Ai(a).\fArr^\Ai(l^\Ai)=m^\Bi(a).\fArr^\Bi(l^\Bi).
\)
  It follows that 
 \( 
   \pReach^\Ai(m^\Ai,\aInit^\Ai,\aEnd^\Ai,l^\Ai,p^\Ai) \;\;\;\text{if and only if}\;\;\; \pReach^\Bi(m^\Bi,\aInit^\Bi,\aEnd^\Bi,l^\Bi,p^\Bi).
 \)
\end{lemma}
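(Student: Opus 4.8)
The plan is to unfold the interpretation of $\pReach$ given in Fig.~\ref{fig:tsl-interpretation} and observe that it refers only to (i) equalities between addresses, (ii) the shape of the path argument, and (iii) composite terms of the form $m(a_r).\fArr(l)$, all of which are invariant between $\Ai$ and $\Bi$ under the stated hypotheses. First I would record the consequences of the agreement assumption: since $\Ai$ and $\Bi$ agree on $\sAddr$ and $\sPath$, we have $\sAddr^\Ai=\sAddr^\Bi$ and $\sPath^\Ai=\sPath^\Bi$, and the interpretations of the variable arguments coincide, so $\aInit^\Ai=\aInit^\Bi$, $\aEnd^\Ai=\aEnd^\Bi$ and $p^\Ai=p^\Bi$. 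Write $a_i$, $a_e$ and $\pi$ for these common values, and note $\pi$ is a finite sequence of distinct elements of the common address universe.

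Next I would split on the definition of $\pReach$. In the base case, $(m^\Ai,a_i,a_e,l^\Ai,\pi)\in\pReach^\Ai$ iff $a_i=a_e$ and $\pi$ is the empty sequence; this is a condition on shared data only, hence holds in $\Ai$ exactly when it holds in $\Bi$. In the remaining case, the witnesses $a_1,\ldots,a_n$ in the definition are not free: clause (a), $\pi=[a_1\,..\,a_n]$, forces them to be the addresses constituting $\pi$, in order, and these lie in $\sAddr^\Ai=\sAddr^\Bi$. Clause (b) is the address equality $a_1=a_i$, clause (c) is $m(a_r).\fArr(l)=a_{r+1}$ for $r<n$, and clause (d) is $m(a_n).\fArr(l)=a_e$. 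Applying the hypothesis $m^\Ai(a).\fArr^\Ai(l^\Ai)=m^\Bi(a).\fArr^\Bi(l^\Bi)$ with the variables $m\in V_\sMem(\varphi)$ and $l\in V_\sLevel(\varphi)$ occurring as arguments of this $\pReach$ literal, and with $a$ ranging over the addresses of $\pi$, each of clauses (c) and (d) holds in $\Ai$ iff it holds in $\Bi$; clauses (a) and (b) are shared verbatim. Hence the full existential statement transfers, giving $\pReach^\Ai(m^\Ai,\aInit^\Ai,\aEnd^\Ai,l^\Ai,p^\Ai)$ iff $\pReach^\Bi(m^\Bi,\aInit^\Bi,\aEnd^\Bi,l^\Bi,p^\Bi)$.

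There is essentially no hard part here: the argument is a direct unfolding. The only point worth stating explicitly is that the existential witnesses in the definition of $\pReach$ are completely determined by the path argument, so that no assumption about how $\Ai$ and $\Bi$ interpret $\sCell$, $\sArray$, $\sMem$ or $\sLevel$ is required; the bridging equality on the composite term $m(a).\fArr(l)$ is precisely and only what the proof consumes. I would then remark that the same observation underlies the literal-by-literal checks in Lemmas~\ref{lem:gapReduction} and~\ref{lem:topReduction}, where $\Ai$ and $\Bi$ differ on $\sLevel$, $\sArray$, $\sCell$ and $\sMem$ but satisfy exactly this composite-term agreement.
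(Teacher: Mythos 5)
Your proof is correct and follows essentially the same route as the paper's: both fix the common values of the address and path arguments (using the agreement on $\sAddr$, $\sPath$), split on whether the path is empty, and in the non-empty case transfer clauses of the form $m(a_r).\fArr(l)=a_{r+1}$ between the two interpretations via the composite-term hypothesis, the witnesses being determined by the path. No gaps; your extra remark that the argument needs no assumption on how $\sLevel$, $\sArray$, $\sCell$, $\sMem$ are interpreted is exactly why the lemma is reusable in the gap- and top-reduction arguments.
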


\setcounter{lemma}{\value{backup}}

\begin{proof}
  Let $\Ai$ and $\Bi$ be two interpretations of $\varphi$ satisfying
  the conditions in the statement of Lemma~\ref{lem:reachSame}, and
  assume $\pReach^\Ai(m^\Ai,\aInit^\Ai,\aEnd^\Ai,l^\Ai,p^\Ai)$ holds for
  some $\aInit,\aEnd\in V_\sAddr(\varphi)$, $m\in V_\sMem(\varphi)$, $p\in
  V_\sPath(\varphi)$. Note that, by assumption $\aInit^\Ai=\aInit^\Bi$,
  $\aEnd^\Ai=\aEnd^\Bi$ and $p^\Ai=p^\Bi$. We consider the cases for
  $p^\Ai$:
  \begin{compactitem}
  \item If $p^\Ai=\epsilon$ then $\aInit^\Ai=\aEnd^\Ai$. Consequently,
    $p^\Bi=\epsilon$ and $\aInit^\Bi=\aEnd^\Bi$, so for interpretation
    $\Bi$, the predicate
    $\pReach^\Ai(m^\Bi,\aInit^\Bi,\aEnd^\Bi,l^\Bi,p^\Bi)$ also holds.
  \item The other case is: $p=[a_1\ldots a_n]$ with $a_1=\aInit$ and
    $m^\Ai(a_n).\fArr^\Ai(l^\Ai)=\aEnd$, and for every $r<n$,
    $m^\Ai(a_r).\fArr^\Ai(l^\Ai)=a_{r+1}$. It follows,
    by~(\ref{eq:equilevel}) that $m^\Bi(a_n).\fArr^\Bi(l^\Bi)=\aEnd$,
    and for every $r<n$, $m^\Bi(a_r).\fArr^\Bi(l^\Bi)=a_{r+1}$. Hence, 
    $\pReach^\Ai(m^\Bi,\aInit^\Bi,\aEnd^\Bi,l^\Bi,p^\Bi)$.
  \end{compactitem}
  The other direction follows similarly.
\qed
\end{proof}

\setcounter{backup}{\value{lemma}}
\setcounter{lemma}{\value{lem-gap}}
\begin{lemma}[Gap-reduction]
  If there is a model $\Ai$ of $\varphi$ with a gap at $n$, then there
  is a model $\Bi$ of $\varphi$ such that, for every $l\in
  V_{\sLevel}(\varphi)$,  we let 
   \[ l^\Bi = 
       \begin{cases}
           l^\Ai & \text{if $l^\Ai < n$}\\[-0.5em]
           l^\Ai - 1 & \text{if $l^\Ai > n$}
       \end{cases} 
    \]
   The number of gaps in $\Bi$ is one less than in $\Ai$.
\end{lemma}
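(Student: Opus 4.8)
The plan is to establish that the interpretation $\Bi$ exhibited in the sketch above is genuinely a model of $\varphi$, and then to count its gaps. First I would recall what $\Bi$ is: it agrees with $\Ai$ on the sorts $\sAddr,\sElem,\sOrd,\sPath,\sSet$ (same carriers, same variable valuations, and the same interpretation of every function and predicate whose profile stays within those sorts), while on $\sLevel,\sArray,\sCell,\sMem$ every variable is reinterpreted by applying the corresponding map $\beta_\sLevel,\beta_\sArray,\beta_\sCell,\beta_\sMem$ to its $\Ai$-value. The structural fact to exploit throughout is that, since $n$ is a \emph{gap}, no level variable of $\varphi$ is valued $n$ in $\Ai$; this is what makes $\beta_\sLevel$ strictly monotone on the attained level values, and it is also exactly where the hypothesis that $\varphi$ has no constants is used, since a literal $l=q$ with $l^\Ai>n$ would be broken by the downward shift.

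The first thing I would prove is the single identity that discharges most cases at once: for every $l\in V_\sLevel(\varphi)$, $m\in V_\sMem(\varphi)$ and $a\in\sAddr^\Ai$,
\[
  m^\Bi(a).\fArr^\Bi(l^\Bi)\;=\;m^\Ai(a).\fArr^\Ai(l^\Ai),
\]
obtained by unfolding the $\beta$ maps and distinguishing $l^\Ai<n$ (both sides read array index $l^\Ai$) from $l^\Ai>n$ (the $+1$ on the array index cancels the $-1$ on the level, so both sides again read $A(l^\Ai)$). With this identity the hypotheses of Lemma~\ref{lem:reachSame} are satisfied, so every literal mentioning $\pReach$ transfers between $\Ai$ and $\Bi$; hence so do $p=\fGetp(m,a_1,a_2,l)$ and $s=\fAddrToSet(m,a,l)$ (both $\pReach$-definable), $\pOrdList(m,p)$ (which inspects only keys, a preserved $\sOrd$-datum), and $\pSkiplist(m,s,a_1,a_2)$, which by its interpretation in Fig.~\ref{fig:tsl-interpretation} is a conjunction of reachability atoms, key comparisons, level comparisons, and array reads at level variables, all of which are preserved.

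After that I would run the routine but long case split over the remaining normalized literals. Literals over $\sAddr,\sElem,\sOrd,\sPath,\sSet$ symbols ($e_1\neq e_2$, $a=\fNull$, $k_1\pOrd k_2$, $s_1=s_2\cup s_3$, $p_1=\fRev(p_2)$, $\pAppend$, $s=\fPathToSet(p)$, and so on) hold because $\Bi$ agrees with $\Ai$ there. Level literals $l_1\neq l_2$, $l_1<l_2$ and $l_1=l_2+1$ hold because $\beta_\sLevel$ restricted to $\Nat\setminus\{n\}$ is the order isomorphism onto $\Nat$, and since no operand attains $n$ the successor relation survives too. The cell and memory literals $c=\fMkcell(e,k,A,l)$, $c=\fRd(m,a)$, $m_2=\fUpd(m_1,a,c)$ and $c=\fError$ fall out of the homomorphic shape of the $\beta$ maps (with $\beta_\sCell$ fixing the all-null, level-$0$ cell $\fError$), and $a=\fArrayRd{A}{l}$, $B=\fArrayUpd{A}{l}{a}$ and $B=A$ need only a further sub-split on whether the relevant index lies below or at/above $n$. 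I expect the array-update literal $B=\fArrayUpd{A}{l}{a}$ together with the reachability and shape atoms to be the only real bookkeeping obstacle; everything else is immediate. Finally, for the gap count I would observe that $\beta_\sLevel$ maps the finite set of level values attained in $\Ai$ order-isomorphically onto that attained in $\Bi$ by deleting the single point $n$; this induces a bijection between the gaps of $\Ai$ other than $n$ and the gaps of $\Bi$, so $\Bi$ has exactly one fewer gap than $\Ai$.
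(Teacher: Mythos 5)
Your proposal is correct and takes essentially the same route as the paper: the same $\beta$-map construction of $\Bi$ and the same literal-by-literal verification, with your pointer-preservation identity plus Lemma~\ref{lem:reachSame} discharging the reachability-dependent literals exactly as that intermediate lemma is intended to be used (the paper's appendix re-derives $\pReach$-preservation inline instead), and with the gap-count and no-constants observations made explicit where the paper leaves them implicit. One detail your ``routine case split'' should spell out: the sublist inclusions inside $\pSkiplist$ are quantified over \emph{all} levels up to $l$, not just levels named by variables, so at the two levels straddling the deleted gap the required inclusion follows only by transitivity of $\subseteq$ across the removed level $n$.
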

\setcounter{lemma}{\value{backup}}

\begin{proof}
  Let $\Ai$ be a model of $\varphi$ with a gap at $n$. We build a
  model $\Bi$ with the condition in the lemma as follows. $\Bi$ agrees
  with $\Ai$ on $\sAddr,\sElem,\sOrd,\sPath,\sSet$. In particular,
  $v^\Bi=v^\Ai$ for variales of these sorts.  For the other sorts we
  let $\Bi_\sigma=\Ai_\sigma$ for
  $\sigma=\sLevel,\sArray,\sCell,\sMem$. We define transformation maps
  for elements of the corresponding domains as follows:

  \[ \begin{array}{rcl@{\hspace{4em}}rcl}
    \beta_\sLevel(j) &=& \begin{cases}
      j     & \text{if $j<n$}\\
      j - 1 & \text{otherwise}
    \end{cases} &
    \beta_\sArray(A)(i) &=&
    \begin{cases}
      A(i) & \text{if $i<n$}\\
      A(i+1) & \text{if $i\geq{}n$}\\
    \end{cases} \\
    \beta_\sCell((e,k,A,l)) &=& (e,k,\beta_\sArray(A),\beta_\sLevel(l)) &
    \beta_\sMem(m)(a) &=& \beta_\sCell(m(a))
  \end{array} \]
  
  Now we are ready to define the valuations of variables $l:\sLevel$,
  $A:\sArray$, $c:\sCell$ and $m:\sMem$:
  \[ l^\Bi=\beta_\sLevel(l^\Ai) \hspace{2.3em}
     A^\Bi=\beta_\sArray(A^\Ai) \hspace{2.3em}
     c^\Bi=\beta_\sCell(c^\Ai)  \hspace{2.3em}
     m^\Bi=\beta_\sMem(m^\Ai)   
  \]
  The interpretation of all functions and predicates is preserved from
  $\Ai$.
  
  The next step is to show that $\Bi$ is indeed a model of $\varphi$.
  All literals of the following form hold in $\Bi$ because if they
  hold in $\Ai$, because the valuations and interpretations of
  functions and predicates of the correspondig sorts are preserved:
  \[ 
  \begin{array}{l@{\hspace{4em}}l@{\hspace{4em}}l}
  	e_1 \neq e_2 & a_1 \neq a_2 & l_1 \neq l_2 \\
	a = \fNull & c = \fError & \\ 
	k_1 \neq k_2 & k_1 \pOrd k_2 & \\ 
        & l_1 < l_2 & l=q\\
	s = \{a\} & s_1 = s_2 \cup s_3 & s_1 = s_2 \setminus s_3 \\
	p_1 \neq p_2 & p = [a] & p_1 = \fRev(p_2) \\
	s = \fPathToSet(p) & \pAppend(p_1, p_2, p_3) &
        \lnot \pAppend(p_1, p_2, p_3) \\
        & 
        &
	\pOrdList(m,p) \\
        &&
  \end{array}
  \]
  A simple argument shows that literals of the form $c = \fRd(m, a)$
  and $m_2 = \fUpd(m_1, a, c)$ hold in $\Bi$ if they do in $\Ai$,
  because the same transformations are performed on both sides of the
  equation. The remaining literals are:

  \begin{itemize}
  \item $c = \fMkcell(e,k,A,l)$: Assuming  $c^\Ai = \fMkcell(e^\Ai,k^\Ai,A^\Ai,l^\Ai)$,
    \[ 
    \fMkcell(e^\Bi,k^\Bi,A^\Bi,l^\Bi)=\fMkcell(e^\Ai,k^\Ai,\beta_\sArray(A^\Ai),\beta_\sLevel(l^\Ai))=\beta_\sCell(c^\Ai)=c^\Bi 
    \]
  \item $a = \fArrayRd{A}{l}$. Assume $a^\Ai =\fArrayRd{A^\Ai}{l^\Ai}$. 
    There are two cases for $l^\Ai$. First,
    $l^\Ai<n$. Then,
    \[ \fArrayRd{A^\Bi}{l^\Bi}= \fArrayRd{A^\Ai}{l^\Ai}=a^\Ai=a^\Bi \]
    Second, $l^\Ai>n$. Then,
    \[ \fArrayRd{A^\Bi}{l^\Bi}= \fArrayRd{A^\Ai}{(l^\Ai-1)+1}=
    \fArrayRd{A^\Ai}{l^\Ai}= a^\Ai=a^\Bi 
    \]
  \item $B = \fArrayUpd{A}{l}{a}$. We assume $B^\Ai=\fArrayUpd{A^\Ai}{l^\Ai}{a^\Ai}$. 
    Consider an arbitary $m\in\Nat$. If $m=l^\Bi$ then
    \begin{align*}
    (\fArrayUpd{A^\Bi}{l^\Bi}{a^\Bi})(m) = 
    (\fArrayUpd{\beta_\sArray(A^\Ai)}{l^\Bi}{a^\Bi})(m)=a^\Bi
    \end{align*}
    If $m\neq l^\Bi$ and $m<n$ then
    \[ \begin{array}{rclcl}
    (\fArrayUpd{A^\Bi}{l^\Bi}{a^\Bi})(m) &=&
    (\fArrayUpd{\beta_\sArray(A^\Ai)}{l^\Bi}{a^\Bi})(m) &=& \\
    &=&(\beta_\sArray(A^\Ai))(m)
    &=&A^\Ai(m) = B^\Ai(m)=\\
    &=&\beta_\sArray(B^\Ai(m))&=&B^\Bi(m)
    \end{array} \]
    Finally, the last case is $m\neq l^\Bi$ and $m\geq n$. In this case:
    \[\begin{array}{rclcl}
    (\fArrayUpd{A^\Bi}{l^\Bi}{a^\Bi})(m) &=&
    (\fArrayUpd{\beta_\sArray(A^\Ai)}{l^\Bi}{a^\Bi})(m) &=&\\
    &=&(\beta_\sArray(A^\Ai))(m)&=&A^\Ai(m+1)=B^\Ai(m+1)=\\
    &=&\beta_\sArray(B^\Ai)(m) &=&B^\Bi(m)
    \end{array}\]
  \item $s = \fAddrToSet(m, a, l)$ and $p = \fGetp(m, a_1, a_2, l)$.
    We first prove that for all variables $m$ and $l$, addresses
    $\aInit$, $\aEnd$ and paths $p$,
    $\pReach(m^\Ai,\aInit,\aEnd,l^\Ai,p)$ if and only if
    $\pReach(m^\Bi,\aInit,\aEnd,l^\Bi,p)$. Assume
    $\pReach(m^\Ai,\aInit,\aEnd,l^\Ai,p)$, then either $\aInit=\aEnd$
    and $p=\epsilon$, in which case $\pReach(m^\Bi,\aInit,\aEnd,l^\Bi,p)$,
    or there is a sequence of addresses $a_1,\ldots a_N$ with
    \begin{enumerate}[$(a)$]
      \item $p=[a_1\ldots a_N]$
      \item $a_1=\aInit$
      \item $m^\Ai(a_r).\fArr^\Ai(l^\Ai)=a_{r+1}$, for $r<N$
      \item $m^\Ai(a_N).\fArr^\Ai(l^\Ai)=\aEnd$
    \end{enumerate}
    Take an arbitrary $r<N$. Either $l^\Ai<n$ or $l^\Ai>n$ (recall
    that $l^\Ai$ is either strictly under or strictly over the
    gap). In either case,
    \[
    m^\Bi(a_r).\fArr^\Bi(l^\Bi)=m^\Ai(a_r).\fArr^\Ai(l^\Ai)=a_{r+1} 
    \]
    Also,
    $m^\Bi(a_N).\fArr^\Bi(l^\Bi)=m^\Bi(a_N).\fArr^\Bi(l^\Bi)=\aEnd$. Hence,
    conditions $(a)$, $(b)$, $(c)$ and $(d)$ hold for $\Bi$ and
    $\pReach(m^\Bi,\aInit,\aEnd,l^\Bi,p)$. Informally, predicate
    $\pReach$ only depends on pointers at level $l$ which are
    preserved. The other direction holds similarly. From the
    preservation of the $\pReach$ predicate it follows that, if
    $\fAddrToSet(m^\Ai,a^\Ai,l^\Ai)=s^\Ai$ then
    \begin{align*}
      \fAddrToSet(m^\Bi,a^\Bi,l^\Bi) &= 
      \{ a'\;|\;\exists p\in\Bis{\sPath}\;.\;(m,a,a',l,p\in\pReach^\Bi \} =\\
      &=
      \{ a'\;|\;\exists p\in\Ais{\sPath}\;.\;(m,a,a',l,p\in\pReach^\Ai \} =\\
      &= \fAddrToSet(m^\Ai,a^\Ai,l^\Ai)=s^\Ai= s^\Bi
    \end{align*}
    Finally, assume $p^\Ai = \fGetp(m^\Ai, a_1^\Ai, a_2^\Ai, l^\Ai)$. If
    $(m^\Ai,a_1^\Ai,a_2^\Ai,l^\Ai,p^\Ai)\in\pReach^\Ai$ then
    $(m^\Bi,a_1^\Bi,a_2^\Bi,l^\Bi,p^\Bi)\in\pReach^\Bi$ and hence
    $p^\Bi = \fGetp(m^\Bi, a_1^\Bi, a_2^\Bi, l^\Bi)$. The other case
    is $\epsilon = \fGetp(m^\Ai, a_1^\Ai, a_2^\Ai, l^\Ai)$ when
    \[ \textit{for no path $p$,}\;\;\;\;\;
    (m^\Ai,a_1^\Ai,a_2^\Ai,l^\Ai,p)\in\pReach^\Ai. 
    \] 
    but then also 
    \[ \textit{for no path $p$,}\;\;\;\;\;
    (m^\Bi,a_1^\Bi,a_2^\Bi,l^\Bi,p)\in\pReach^\Bi
    \]
    and then $\epsilon = \fGetp(m^\Bi, a_1^\Bi, a_2^\Bi, l^\Bi)$, as
    desired.
  \item $\pSkiplist(m, r, l, a_1, a_2)$. We assume $\pSkiplist(m^\Ai,
    r^\Ai, l^\Ai, a_1^\Ai, a_2^\Ai)$. This implies:
    \begin{itemize}
    \item $\pOrdList^\Ai(m^\Ai,getp^\Ai(a_1^\Ai,a_2^\Ai,0))$. Let $p$
      be an element of $\Ais{\sPath}$ such that
      $p=getp^\Ai(a_1^\Ai,a_2^\Ai,0))$. As shown previously,
      $p=getp^\Bi(a_1^\Bi,a_2^\Bi,0))$, and then
      $\pOrdList^\Bi(m^\Bi,\fGetp^\Bi(a_1^\Bi,a_2^\Bi,0))$ holds because
      $\pOrdList^\Ai(m^\Ai,\fGetp^\Ai(a_1^\Ai,a_2^\Ai,0))$ does.
    \item $r^\Ai=\fPathToSet^\Ai(\fGetp^\Ai(m^\Ai,a_1^\Ai,a_2^\Ai,0))$. Again
      $r^\Bi=\fPathToSet^\Bi(\fGetp^\Bi(m^\Bi,a_1^\Bi,a_2^\Bi,0))$ because
      $\fGetp^\Bi(m^\Bi,a_1^\Bi,a_2^\Bi,0)=\fGetp^\Ai(m^\Ai,a_1^\Ai,a_2^\Ai,0)$.
    \item $0\leq l^\Ai$, which implies $0\leq l^\Bi$
    \item $\forall a\in r^\Ai\;.\;m^\Ai(a^\Ai).\fMax^\Ai\leq
      l^\Ai$. Since $r^\Bi=r^\Ai$ and
      $m^\Bi(a)=\beta_\sCell(m^\Ai(a))$ it is enough to consider two
      cases. First, $m^\Ai(a).\fMax^\Ai=l^\Ai$, in which case
      $m^\Bi(a).\fMax^\Bi=l^\Bi$. Second $m^\Ai(a).\fMax^\Ai<l^\Ai$, in
      which case $m^\Ai(a).\fMax^\Ai\leq l^\Ai$.
    \item If $(0=l^\Ai)$ then $(0=l^\Bi)$.
    \item If $(0<l^\Ai)$, and for all $i$ from $0$ to $l$:
      \begin{align}
        &m^\Ai(a_2).\fArr^\Ai(i)=\fNull^\Ai\\
        &\fPathToSet^\Ai(\fGetp^\Ai(m^\Ai,a_1^\Ai,a_2^\Ai,i+1)) \subseteq \notag \\
        & \fPathToSet^\Ai(\fGetp^\Ai(m^\Ai,a_1^\Ai,a_2^\Ai,i))
      \end{align}
      Then $0<l^\Bi$ (because $0$ is never removed)
    \end{itemize}
  \end{itemize}
  This concludes the proof.
  \qed
\end{proof}

\setcounter{backup}{\value{theorem}}
\setcounter{theorem}{\value{thm-noconstants}}

\begin{theorem}
  A \TSL formula $\varphi$ is satisfiable if and only if for some
  arrangement $\alpha$, both $(\varphiPA \And \alpha)$ and $(\varphiNC
  \And \alpha)$ are satisfiable.
\end{theorem}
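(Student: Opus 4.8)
The plan is to prove the two implications separately; the forward one is immediate and the backward one carries all the content. For the forward direction: if $\Ai \models \varphi$, then since $\varphi$ is equivalent to $\varphiNC \And \varphiPA$ the interpretation $\Ai$ models both conjuncts, and the natural numbers $l^\Ai$ assigned to the level variables determine, for each pair $l_1,l_2$, exactly one of $l_1 = l_2$, $l_1 < l_2$, $l_2 < l_1$; collecting these gives an order arrangement $\alpha$ with $\Ai \models \alpha$, so $\Ai$ witnesses satisfiability of $(\varphiPA \And \alpha)$ and of $(\varphiNC \And \alpha)$.

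For the backward direction I would fix an order arrangement $\alpha$ for which $(\varphiPA \And \alpha)$ and $(\varphiNC \And \alpha)$ are both satisfiable. A model of $(\varphiPA \And \alpha)$ is a map $f : V_\sLevel(\varphi) \to \Nat$ assigning levels consistently with $\alpha$ and with every arithmetic literal of $\varphi$. Since $\varphiNC$ retains every literal $\lnew = l+1$ (these are not of the excluded form $l = q$) and $\alpha$ adds only order literals, $(\varphiNC \And \alpha)$ is a sanitized formula without constants, so by Corollary~\ref{cor:gapless} it has a gapless model $\Bi$, in which the equivalence classes of level variables occupy a contiguous block of levels (ordered as in $\alpha$) and every array is $\fNull$ above it. The idea is then to \emph{stretch} $\Bi$ into a model $\widehat{\Bi}$ of $\varphi = \varphiNC \And \varphiPA$: keep the interpretations of $\sAddr,\sElem,\sOrd,\sPath,\sSet$ and of all memory and cell components, reassign each level variable $l$ from its position in $\Bi$ to the value $f(l)$ (spreading the finitely many classes apart exactly as $f$ dictates), and define each array on the newly created ``gap'' levels and on the levels above the top class by replication: gap levels copy the pointer content of a neighbouring class, and levels above the top class receive $\fNull$. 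The delicate point — and exactly what \emph{sanitation} buys — is the choice of neighbour near an array-update literal $B = \fArrayUpd{A}{l}{a}$: the fresh literal $\lnew = l+1$ forces the class one level above $l$ to be itself a level variable, so no gap is ever inserted immediately above an updated level, and replicating the content of level $\lnew$ (rather than of level $l$, whose content differs between $A$ and $B$) keeps the invariant ``$A$ and $B$ agree outside level $l$'' intact through the stretch.

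Finally I would check that $\widehat{\Bi}$ models $\varphi$, literal by literal, mirroring the proof of Lemma~\ref{lem:gapReduction} in the reverse direction. By construction $l^{\widehat{\Bi}} = f(l)$, so $\widehat{\Bi} \models \varphiPA$ (and $\alpha$). For $\varphiNC$: literals over $\{\sAddr,\sElem,\sOrd,\sPath,\sSet\}$ and the literals $c = \fRd(m,a)$, $m_2 = \fUpd(m_1,a,c)$ survive because the stretch is the identity on those components and transforms both sides of each equation uniformly; the literals $a = \fArrayRd{A}{l}$, $c = \fMkcell(e,k,A,l)$ and $B = \fArrayUpd{A}{l}{a}$ survive by the level re-indexing together with the replication rule (this is where the $\lnew$ trick is used); and the reachability-dependent literals $p = \fGetp(m,a_1,a_2,l)$, $s = \fAddrToSet(m,a,l)$, $\pOrdList(m,p)$ and $\pSkiplist(m,s,a_1,a_2)$ survive by Lemma~\ref{lem:reachSame}, since $\pReach$ is evaluated only at levels that are values of level variables and those pointers are carried over unchanged, while the $\fNull$-at-the-top conditions of $\pSkiplist$ hold because the stretch maps $\fNull$ levels to $\fNull$ levels. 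The main obstacle is precisely this stretching construction — arranging the array values on the inserted gap levels so that every array-update (and $\pSkiplist$) literal is preserved — which is where a naive gap insertion fails and where sanitation earns its place by guaranteeing a safe neighbouring level to replicate; the remaining literal-by-literal bookkeeping is routine.
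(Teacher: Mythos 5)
Your proposal is correct and follows essentially the same route as the paper: forward direction by reading off the arrangement from a model, backward direction by taking a gapless model of $(\varphiNC \And \alpha)$ via Corollary~\ref{cor:gapless}, a Presburger model of $(\varphiPA \And \alpha)$, and stretching the former so that level variables take the Presburger values, with gap levels filled by replicating a neighbouring variable level and Lemma~\ref{lem:reachSame} handling the reachability-dependent literals. The paper makes your ``copy a neighbouring class'' rule precise with the maps $f$ and $\finv$ (each inserted level replicates the nearest variable level below it), and your observation about why sanitation is needed for $B=\fArrayUpd{A}{l}{a}$ --- the literal $\lnew=l+1$ guarantees the level just above $l$ is itself a variable level, so no replicated level ever copies level $l$ --- is exactly the argument in the paper's case analysis.
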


\setcounter{theorem}{\value{backup}}
 \begin{proof}
   The ``$\Rightarrow$'' direction follows immediately, since a model
   of $\varphi$ contains a model of its subformulas $\varphiPA$ and
   $\varphiNC$, and a model of $\varphiPA$ induces a satisfying
   order arrangement $\alpha$.

   For ``$\Leftarrow$'', let $\alpha$ be an order arrangement for
   which both $(\varphiPA \And \alpha)$ and $(\varphiNC \And \alpha)$
   are satisfiable, and let $\Ai$ be a model of $(\varphiNC \And
   \alpha)$ and $\Bi$ be a model of $(\varphiPA \And \alpha)$.  By
   Corollary~\ref{cor:gapless}, we assume that $\Ai$ is a gapless
   model. In particular, for all variables
   $l\in\V_{\sLevel}(\varphi)$, then $l^\Ai<\K$, where
   $\K=|V_\sLevel(\varphi)|$, and for all cells $c\in\Ai_{\sCell}$,
   with $c=(e,k,A,l)$, $l<\K$. Model $\Bi$ of $(\varphiPA \And
   \alpha)$ assigns values to variables from $\V_{\sLevel}(\varphi)$,
   consistently with $\alpha$. The obstacle is that the values for
   levels in $\Ai$ and in $\Bi$ may be different, so the models cannot
   be immediately merged. We will build a model $\Ci$ of $\varphi$
   using $\Ai$ and $\Bi$.  Let $\Kpa$ be the largest value assigned by
   $\Bi$ to any variable from $V_{\sLevel}(\varphi)$.  We start by
   defining the following maps:
   \[
   \begin{array}{rcl@{\hspace{8em}}rcl}
   f: [\K] &\Into    & [\Kpa] &   \finv : [\Kpa] &\Into &[\K] \\
      l^\Ai &\mapsto & l^\Bi   &            n     &\mapsto & max \{ k\in[\K]\;|\; f(k)\leq n \}
    \end{array}
    \]
    Essentially, $\finv$ provides the level from $\Ai$ that will be
    used to fill the missing level in model $\Ci$. Some easy facts
    that follow from the choice of the definition of $f$ and $\finv$
    are that,
for every variable $l$ in $V_{\sLevel}(\varphi)$, $\finv(f(l^\Ai))=l^\Ai$.
     Also, every literal of the form $B = \fArrayUpd{A}{l}{a}$
     satisfies that $\finv(l+1)=\finv(l)+1$
     because a sanitized formula $\varphi$ contains a literal
     $\lnew=l+1$ for every such $B = \fArrayUpd{A}{l}{a}$.

    We show now how to build a model $\Ci$ of $\varphi$. The only
    literals missing in $\varphiNC$ with respect to $\varphi$ are
		literals of the form $l=q$ for constant level $q$. $\Ci$ agrees
    with $\Ai$ on sorts $\sAddr,\sElem,\sOrd,\sPath,\sSet$. Also
    the domain $\Cis{\sLevel}$ is the naturals with order, and
     $\Cis{\sCell}=\Cis{\sElem}\times\Cis{\sOrd}\times\Cis{\sArray}\times\Cis{\sLevel}$
     and $\Cis{\sMem} = \Cis{\sCell}^{\Cis{\sAddr}}$.
 %
 %
    For $\sLevel$ variables, we let $v^\Ci=v^\Bi$, where $v^\Bi$ is
    the interpretation of variable $v$ in $\Bi$, the model of
    $(\varphiPA\And\alpha)$. Note that $v^\Ci=v^\Bi=f(v^\Ai)$. For
    arrays, we define $\Cis{\sArray}$ to be the set of arrays of
    addresses indexed by naturals, and define the transformation
    $\beta:\Ais{\sArray}\Into\Cis{\sArray}$ as follows:
    $\beta_\sArray(A)(i)=A(\finv(i))$.


 %
   Then, elements of sort cell $c:(e,k,A,l)$ are transformed into
   $\beta_\sCell(c)=(e,k,\beta_\sArray(A),f(l))$. Variables of sort array $A$ are
   interpreted as $A^\Ci=\beta_\sArray(A^\Ai)$ and variables of sort cell as
   $\beta_\sCell(c)$. Finally, heaps are transformed by returning the
   transformed cell: 
for $v\in\V_\sMem$, $v^\Ci (a) = \beta_\sCell(v^\Ai)(a)$.
%
 %
    We only need to show that $\Ci$ is indeed a model of
    $\varphi$. Interestingly, all literals $l=q$ in $\Ci$ are
    immediately satisfied because $l^\Ci=l^\Bi$ and $q^\Ci=q^\Bi$, and
    the literal $(l=q)$ holds in the model $\Bi$ of $\varphiPA$. The
    same holds for all literals in $\varphi$ of the form $l_1<l_2$,
    $l_1=l_2+1$ and $l_1\neq l_2$: these literals hold in $\Ci$
    because they hold in $\Bi$.  The following literals also hold in
    $\Ci$ because they hold in $\Ai$ and their subformulas either
    receive the same values in $\Ci$ than in $\Ai$ or the
    transformations are the same:

    \begin{tabular}{p{3.9cm}p{3.9cm}l}
      $e_1 \neq e_2$ & $a_1 \neq a_2$ & \\ 
      $a = \fNull$ & $c = \fError$ & $c = \fRd(m, a)$ \\
      $k_1 \neq k_2$ & $k_1 \pOrd k_2$ & $m_2 = \fUpd(m_1, a, c)$ \\
      $c = \fMkcell(e,k,A,l)$ & & \\ 
      $s = \{a\}$ & $s_1 = s_2 \cup s_3$ & $s_1 = s_2 \setminus s_3$ \\
      $p_1 \neq p_2$ & $p = [a]$ & $p_1 = \fRev(p_2)$ \\
      $s = \fPathToSet(p)$ & $\pAppend(p_1, p_2, p_3)$ &
      $\lnot \pAppend(p_1, p_2, p_3)$ \\
      & & $\pOrdList(m,p)$ \\
    \end{tabular}

	 Finally, observe that $(s = \fAddrToSet(m, a, l))$ and $(p
		= \fGetp(m, a_1, a_2, l))$ hold in $\Ci$ whenever they hold in
		$\Ai$, as they follow directly from Lemma~\ref{lem:reachSame}.  The 
	 remaining literals are:
   \begin{itemize}
   \item $a = \fArrayRd{A}{l}$: assume $a^\Ai =
     \fArrayRd{A^\Ai}{l^\Ai}$. Then, in $\Ci$, $a^\Ci=a^\Ci$ and
     \[ 
     \fArrayRd{A^\Ci}{l^\Ci}=\fArrayRd{\beta(A^\Ai)}{l^\Ci}=A^\Ai(\finv(l^\Ci))=A^\Ai(\finv(f(l^\Ai)))=A^\Ai(l^\Ai)=a^\Ai=a^\Ci. \]
   \item $B = \fArrayUpd{A}{l}{a}$ : We distinguish two cases. First,
     let $n=l^\Ci$. Then, 
     \[ \begin{array}{l}
     \fArrayUpd{A^\Ci}{l^\Ci}{a}(n)=\fArrayUpd{A^\Ci}{l^\Ci}{a}(l^\Ci)=a, \text{ and} \\
     B^\Ci(n)=B^\Ci(l^\Ci)=B^\Ai(\finv(l^\Ci))=B^\Ai(\finv(f(l^\Ai)))=B^\Ai(l^\Ai)=a.
     \end{array}
     \]
     The second case is $n\neq l^\Ci$. Then
     $(\fArrayUpd{A^\Ci}{l^\Ci}{a})(n)=A^\Ci(n)=A^\Ai(\finv(n))$, and
     $B^\Ci(n)=B^\Ai(\finv(n))$. Now, $\finv(n)\neq l^\Ai$. To show
     this we consider the two cases for $n\neq l^\Ci$:
     \begin{itemize}
     \item If $n<l^\Ci$ then, since $\finv(n)=\max\{k\in[\K]|f(k)\leq
         n\}$ by definition, $f(l^\Ai)=l^\Ci>n$ and $\finv(n)<l^\Ai$
       which implies $\finv(n)\neq l^\Ai$.
     \item If $n>l^\Ci$ then $n\geq l^\Ci+1$. 
       As reasoned above
       there is a different literal $\lnew=l+1$ for which
       \( \finv(n)\geq \finv(\lnew^\Ci)>\finv(l^\Ci)=l^\Ai\)
     \end{itemize}
     Since in both cases $\finv(n)\neq l^\Ai$, then 
     \[
		 B^\Ci(n)=B^\Ai(\finv(n))=A^\Ai(\finv(n))=\fArrayUpd{A^\Ai}{l^\Ai}{a}(\finv(n))=\fArrayUpd{A^\Ci}{l^\Ci}{a}(n)
     \]  
     Essentially, the choice to introduce a variable $\lnew=l+1$
     restricts the replication of identical levels to only the level
     $l$ in $B=\fArrayUpd{A}{l}{a}$. All higher and lower levels are
     replicas of levels different than $l$ (where $A$ and $B$ agree as
     in model $\Ai$).
     \item $s = \fAddrToSet(m, a, l)$ and $p = \fGetp(m, a_1, a_2, l)$:
       it is easy to show by induction on the length
       of paths that, for all $l^\mA$:
       \begin{equation}
         (m^\mA,a^\mA,b^\mA,l^\mA,p^\mA)\in\pReach^\mA
         \;\;\;\;\textit{iff}\;\;\;\;\; 
         (m^\Ci,a^\Ci,b^\Ci,l^\Ci,p^\Ci)\in\pReach^\Ci
         \label{eq:reachAreachC}
       \end{equation}
       It follows that $s^\Ai = \fAddrToSet(m^\Ai, a^\Ai, l^\Ai)$
			 implies $s^\Ci = \fAddrToSet(m^\Ci, a^\Ci, l^\Ci)$. Also
			 $p^\Ai = \fGetp(m^\Ai, a_1^\Ai, a_2^\Ai, l^\Ai)$
			 implies that $p^\Ci = \fGetp(m^\Ci, a_1^\Ci, a_2^\Ci, l^\Ci)$.
       Essentially, since level $l^\Ci$ in $\Ci$ is a replica of level
       $l^\Ai$ in $\Ai$, the transitive closure of following pointers
       is the same paths (for $\fGetp$) and the same also the same sets
       (for $\fAddrToSet$).
     \item $\pSkiplist(m, r, l, a_1, a_2)$. We assume
       $\pSkiplist(m^\Ai, r^\Ai, l^\Ai, a_1^\Ai, a_2^\Ai)$. This
       implies all of the following in $\Ai$:
     \begin{itemize}
     \item $\pOrdList^\Ai(m^\Ai,getp^\Ai(a_1^\Ai,a_2^\Ai,0))$. Let $p$
       be such that
       $p=getp^\Ai(a_1^\Ai,a_2^\Ai,0))$. As a consequence of
       (\ref{eq:reachAreachC}) $p=getp^\Ci(a_1^\Ci,a_2^\Ci,0))$, and
       then
       \[ \pOrdList^\Ai(m^\Ai,\fGetp^\Ai(a_1^\Ai,a_2^\Ai,0))
       \;\;\;\;\text{implies}\;\;\;\;
       \pOrdList^\Ci(m^\Ci,\fGetp^\Ci(a_1^\Ci,a_2^\Ci,0)).
       \]
		 \item $r^\Ai=\fPathToSet^\Ai(\fGetp^\Ai(m^\Ai,a_1^\Ai,a_2^\Ai,0))$. Because 
			 $\fGetp^\Ci(m^\Ci,a_1^\Ci,a_2^\Ci,0)=\fGetp^\Ai(m^\Ai,a_1^\Ai,a_2^\Ai,0)$,
			 once more $r^\Ci=\fPathToSet^\Ci(\fGetp^\Ci(m^\Ci,a_1^\Ci,a_2^\Ci,0))$.
		 \item $0\leq l^\Ai$, which implies $0\leq l^\Ci$.
     \item $\forall a\in r^\Ai\;.\;m^\Ai(a^\Ai).\fMax^\Ai\leq
       l^\Ai$. Since $r^\Ci=r^\Ai$ and

       $m^\Ci(a)=\gamma(m^\Ai(a))$ it is enough to consider two
       cases. First, $m^\Ai(a).\fMax^\Ai=l^\Ai$, in which case
       $m^\Ci(a).\fMax^\Ci=l^\Ci$. Second $m^\Ai(a).\fMax^\Ai<l^\Ai$, in
       which case $m^\Ai(a).\fMax^\Ai\leq l^\Ai$.
     \item If $(0=l^\Ai)$ then $(0=l^\Ci)$.

     \item If $(0<l^\Ai)$, and for all $i$ from $0$ to $l^\Ai$:
       \begin{align*}
         &m^\Ai(a_2).\fArr^\Ai(i)=\fNull^\Ai\\
         &\fPathToSet^\Ai(\fGetp^\Ai(m^\Ai,a_1^\Ai,a_2^\Ai,i+1)) \subseteq \notag \\
         & \fPathToSet^\Ai(\fGetp^\Ai(m^\Ai,a_1^\Ai,a_2^\Ai,i))
       \end{align*}
       Then $0<l^\Ci$. Consider an arbitrary $i$ between $0$ and
       $l^\Ci$. It follows that $\finv(i)\leq \finv(l^\Ci)$ so
       $\finv(i)\leq l^\Ai$ and then 
       \begin{align*}
         &m^\Ci(a_2).\fArr^\Ci(i)=m^\Ci(a_2).\fArr^\Ai(\finv(i))=\fNull^\Ai=\fNull^\Ci\\
         &\fPathToSet^\Ci(\fGetp^\Ci(m^\Ci,a_1^\Ci,a_2^\Ci,i+1))=\\
         &\fPathToSet^\Ai(\fGetp^\Ai(m^\Ai,a_1^\Ai,a_2^\Ai,\finv(i+1)))\subseteq \notag \\
         & \fPathToSet^\Ai(\fGetp^\Ai(m^\Ai,a_1^\Ai,a_2^\Ai,\finv(i)))=\\
         &\fPathToSet^\Ci(\fGetp^\Ci(m^\Ci,a_1^\Ci,a_2^\Ci,i))
       \end{align*}
     \end{itemize}
   \end{itemize}
   This concludes the proof.
   \qed
  \end{proof}

\setcounter{backup}{\value{lemma}}
\setcounter{lemma}{\value{lem-tslIFFtslk}}
\begin{lemma}
  Let $\psi$ be a sanitized $\TSL$ formula with no
  constants. Then, $\psi$ is satisfiable if and only if
  $\toTSLK{\psi}$ is also satisfiable.
\end{lemma}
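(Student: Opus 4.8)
(Plan)
The plan is to prove both directions by transferring between a $\TSL$ interpretation and a $\TSLK(\K)$ interpretation, using the observation that $\toTSLK{\cdot}$ merely \emph{defunctionalizes} each array variable $A$ into the finite tuple of addresses $v_{\fArrayRd{A}{0}},\ldots,v_{\fArrayRd{A}{\K-1}}$, and that, by Corollary~\ref{cor:gapless}, for the ``$\Rightarrow$'' direction it suffices to start from a \emph{gapless} model of $\psi$. In such a model every level variable is interpreted strictly below $\K=|V_\sLevel(\psi)|$, and every array occurring in the formula is $\fNull$ at all levels $\geq\K$; hence the first $\K$ entries of each array carry all the information $\psi$ can observe, which is exactly what the $\K$ fresh address variables record.

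For ``$\Rightarrow$'', I would take a gapless model $\Ai$ of $\psi$ and define a $\TSLK(\K)$ interpretation $\Bi$ that agrees with $\Ai$ on $\sAddr,\sElem,\sOrd,\sPath,\sSet$, truncates $\sLevel$ to $[\K]$, sets $v_{\fArrayRd{A}{i}}^\Bi:=A^\Ai(i)$ for each array variable $A$ and each $i\in[\K]$, and accordingly replaces every cell of $\Ai$ by the $\TSLK$ cell obtained by substituting its array component with that component's first $\K$ entries, and every memory pointwise. One then checks, literal by literal, that each literal of $\toTSLK{\psi}$ holds in $\Bi$. The literals untouched by the translation transfer verbatim: $\Ai$ and $\Bi$ share the carriers and valuations of the relevant sorts, and the $\TSLK$ semantics of the shared reachability machinery ($\pReach$, $\fGetp$, $\fAddrToSet$, $\pOrdList$) is precisely the $\TSL$ semantics restricted to levels in $[\K]$ — this is essentially the argument of Lemma~\ref{lem:reachSame}, since $\pReach$ at a fixed level inspects only the $\fArr(l)$-component, which is preserved. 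The rewritten literals for $\fMkcell$, $\fArrayRd$ and $\fArrayUpd$ follow by unfolding the array interpretations componentwise, and the rewritten $\pSkiplist$ literal follows because its $\TSL$ definition (Fig.~\ref{fig:tsl-interpretation}) unfolds, in a gapless model, into exactly the finite conjunction~\eqref{eq:transSkiplist} over the index ranges $0\ldots\K-1$ and $0\ldots\K-2$.

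For ``$\Leftarrow$'' the construction is symmetric and does not even need Corollary~\ref{cor:gapless}: from a $\TSLK(\K)$ model $\Bi$ of $\toTSLK{\psi}$ I would build a $\TSL$ model $\Ai$ whose level sort is all of $\Nat$, whose arrays are obtained by \emph{padding}, $A^\Ai(i):=v_{\fArrayRd{A}{i}}^\Bi$ for $i<\K$ and $A^\Ai(i):=\fNull$ for $i\geq\K$, and whose cells and memories lift accordingly; the valuations on the other sorts are copied from $\Bi$. This $\Ai$ is a legitimate $\TSL$-interpretation (the conditions of Fig.~\ref{fig:tsl-interpretation} on the carriers hold by construction). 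One then verifies that the original literals of $\psi$ hold in $\Ai$; the only non-routine points are that $\fArrayUpd{A}{l}{a}$ is handled correctly because every level variable is interpreted below $\K$ (in particular $l$, and $\lnew=l+1$ thanks to sanitation), so the padded region is never disturbed, and that the conjunction~\eqref{eq:transSkiplist} re-folds into the $\TSL$ definition of $\pSkiplist$.

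The main obstacle I expect, in both directions, is precisely the $\pSkiplist$ case: one must match the existential-plus-bounded-universal shape of the $\TSL$ interpretation of $\pSkiplist$ (with its quantifier $\forall i\in 0\ldots l_p$ and its separate base case $0=l$) against the static conjunction~\eqref{eq:transSkiplist} over $0\ldots\K-1$ and $0\ldots\K-2$, arguing that the two index ranges agree in content. This is where gaplessness (for $\Rightarrow$) and the fact that $\K$ bounds all levels (for $\Leftarrow$) are essential: they force every relevant level below $\K$ and every array $\fNull$ above, so the ``extra'' indices between $l_p$ and $\K-2$ contribute only trivial conjuncts. The remaining cases amount to a mechanical case analysis over the normalized literals of Definition~\ref{def:normalized-literals}, parallel to the one in the proof of Lemma~\ref{lem:gapReduction}.
\qed
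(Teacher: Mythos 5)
Your plan follows essentially the same route as the paper: the paper also splits the equivalence into two lemmas, proving ``$\Rightarrow$'' by taking a gapless model (via Corollary~\ref{cor:gapless}) and truncating arrays, cells and memories to the first $\K$ levels while setting $v_{\fArrayRd{A}{i}}:=A(i)$, and ``$\Leftarrow$'' by padding the $\TSLK$ data with $\fNull$ above level $\K$, followed in both directions by a literal-by-literal check in which $\pReach$-based literals and $\pSkiplist$ are the delicate cases. Your proposal is correct and matches that construction in all essentials.
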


\setcounter{lemma}{\value{backup}}

\begin{proof}
  Directly from Lemmas~\ref{lem:tsl2tslk} and~\ref{lem:tslk2tsl}
  below, which prove each direction separately.
  \qed
\end{proof}

\begin{lemma}
  Let $\varphi$ be a normalized set of $\TSL$ literals with no
  constants. Then, if $\varphi$ is satisfiable then $\toTSLK{\varphi}$
  is also satisfiable.
  \label{lem:tsl2tslk}
\end{lemma}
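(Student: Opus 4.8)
The plan is to start from an arbitrary $\TSL$ model of $\varphi$ and \emph{truncate} it to its first $\K=|V_\sLevel(\varphi)|$ levels, obtaining a $\TSLK$ model of $\toTSLK{\varphi}$. First I would observe that we may assume $\varphi$ is sanitized: sanitation only adds literals of the form $\lnew=l+1$ with $\lnew$ fresh, and such a literal is its own image under $\toTSLK{\cdot}$, so it affects the satisfiability of neither side. Then, if $\varphi$ is satisfiable, Corollary~\ref{cor:gapless} yields a \emph{gapless} model $\Ai$. The crucial consequences are that in $\Ai$ the values taken by level variables form an initial segment $\{0,\dots,M\}$ of the naturals with $M<\K$, that every array $C\in\sArray^\Ai$ has $C(n)=\fNull$ for all $n>M$, and (here sanitation is used) that the level variable $l$ in every literal $B=\fArrayUpd{A}{l}{a}$ satisfies $l^\Ai<\K$. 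Hence $\Ai$ ``uses'' only levels in $[\K]$.

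The second step is the construction of a $\TSLK$ interpretation $\Bi$ over $\K$ levels. It agrees with $\Ai$ on $\sAddr,\sElem,\sOrd,\sSet,\sPath$ and on all variables of those sorts, and it keeps level variables unchanged, $l^\Bi=l^\Ai\in[\K]$. Taking $\sLevel^\Bi=[\K]$, define the truncations $\beta_\sArray(A)=(A(0),\dots,A(\K-1))$, $\beta_\sCell((e,k,A,l))=(e,k,\beta_\sArray(A),l)$, and $\beta_\sMem(m)(a)=\beta_\sCell(m(a))$, and let $A^\Bi=\beta_\sArray(A^\Ai)$, $c^\Bi=\beta_\sCell(c^\Ai)$, $m^\Bi=\beta_\sMem(m^\Ai)$. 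For each fresh variable $v_{\fArrayRd{A}{i}}$ created by the translation put $v_{\fArrayRd{A}{i}}^\Bi=A^\Ai(i)$, and interpret the $\TSLK$ symbols (the fixed-width array read, the $\K$-ary $\fMkcell$, and $\pReach,\fGetp,\fAddrToSet,\fPathToSet,\pOrdList$) in the standard $\TSLK$ way.

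The third step is a literal-by-literal check that $\Bi\models\toTSLK{\varphi}$. Literals left untouched by the translation whose subterms live in $\{\sAddr,\sElem,\sOrd,\sSet,\sPath\}$ hold trivially, and $c=\fRd(m,a)$, $m_2=\fUpd(m_1,a,c)$ and $c=\fError$ hold because $\beta$ is applied consistently on both sides of each equation. The literal $c=\fMkcell(e,k,A,l)$ becomes $c=(e,k,v_{\fArrayRd{A}{0}},\dots,v_{\fArrayRd{A}{\K-1}})$, which holds by the definitions of $\beta_\sCell$ and of the $v$'s; $a=\fArrayRd{A}{l}$ becomes $\bigwedge_i(l=i\Impl a=v_{\fArrayRd{A}{i}})$, true since $l^\Ai\in[\K]$ and $a^\Ai=A^\Ai(l^\Ai)$; and $B=\fArrayUpd{A}{l}{a}$ becomes (\ref{eq:transBeqAl}), true since $B^\Ai$ differs from $A^\Ai$ only at the index $l^\Ai\in[\K]$. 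For $s=\fAddrToSet(m,a,l)$ and $p=\fGetp(m,a_1,a_2,l)$ the point is that $m^\Bi(a).\fArr^\Bi(j)=m^\Ai(a).\fArr^\Ai(j)$ for every $j\in[\K]$, in particular for $j=l^\Ai$; an argument analogous to Lemma~\ref{lem:reachSame} then gives $\pReach^\Ai(m^\Ai,\aInit^\Ai,\aEnd^\Ai,l^\Ai,p^\Ai)$ exactly when $\pReach^\Bi(m^\Bi,\aInit^\Bi,\aEnd^\Bi,l^\Bi,p^\Bi)$, so $\fGetp$, $\fAddrToSet$, $\fPathToSet$ and $\pOrdList$ agree across $\Ai$ and $\Bi$.

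The hard part will be the translated skiplist-shape literal, i.e.\ showing that the finite unrolling (\ref{eq:transSkiplist}) over $i\in[\K]$ and $i\in\{0,\dots,\K-2\}$ holds in $\Bi$ precisely when the recursively defined predicate $\pSkiplist$ holds in $\Ai$. Its first two conjuncts ($\pOrdList$ of the level-$0$ path, and $r$ being that path's $\fPathToSet$-image) transfer by the reachability argument just described. What remains is to establish the per-level conjuncts — $\fRd(m,a_2).\fArr(i)=\fNull$ and the sublist inclusions $\fPathToSet(\fGetp(m,a_1,a_2,i+1))\subseteq\fPathToSet(\fGetp(m,a_1,a_2,i))$ — for \emph{every} $i<\K$, not merely up to the height that the $\TSL$ interpretation of $\pSkiplist$ witnesses internally. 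This is exactly where gaplessness is indispensable: for $i$ strictly above the topmost level variable, $\fRd(m,a_2).\fArr(i)=\fNull$ by the gapless condition (after, if needed, one application of Lemma~\ref{lem:topReduction}) and $\fGetp(m,a_1,a_2,i)$ degenerates, so the inclusion collapses to $\emptyset\subseteq\cdot$; for the lower levels the clauses coincide verbatim with those inside the definition of $\pSkiplist^\Ai$. I expect this case to carry the only genuinely tedious case analysis of the argument; everything else is bookkeeping that mirrors the corresponding cases in the proofs of Lemma~\ref{lem:gapReduction} and Theorem~\ref{thm:NoConstants}.
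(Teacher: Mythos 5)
Your proposal follows essentially the same route as the paper's proof: invoke Corollary~\ref{cor:gapless} to obtain a gapless model, truncate it to the first $\K=|V_\sLevel(\varphi)|$ levels (interpreting the fresh variables as $v_{\fArrayRd{A}{i}}^\Bi=A^\Ai(i)$), and verify the translated literals case by case, with $\pReach$-preservation handling $\fGetp$, $\fAddrToSet$ and the skiplist literal. If anything, you spell out the $\pSkiplist$ case and the role of gaplessness for the levels above the witnessed height in more detail than the paper, which simply asserts that each conjunct of the unrolled translation holds because the corresponding condition holds in the original model.
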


\begin{proof}
  Assume $\varphi$ is satisfiable, which implies (by
  Corollary~\ref{cor:gapless}) that $\varphi$ has a gapless model
  $\mA$. This model $\Ai$ satisfies that for every natural $i$ from
  $0$ to $\K-1$ there is a level $l\in\TVar{\sLevel}(\varphi)$ with
  $l^\Ai=i$.



  \paragraph*{\textbf{Building a Model $\mB$}}
  We now construct a model $\mB$ of $\trphi$. For the domains:
  \[
  \Bis{\sAddr}=\Ais{\sAddr}  \;\;\;\;\;\;\;
  \Bis{\sElem}=\Ais{\sElem}  \;\;\;\;\;\;\;
  \Bis{\sOrd}=\Ais{\sOrd}    \;\;\;\;\;\;\;
  \Bis{\sPath}=\Ais{\sPath}  \;\;\;\;\;\;\;
  \Bis{\sSet}=\Ais{\sSet}   
  \]
  and
  \[ \Bis{\sLevel}=[\K]  \hspace{4em}
  \Bis{\sCell}=\Bis{\sElem}\times\Bis{\sOrd}\times\Bis{\sAddr}^\K \hspace{4em}
  \Bis{\sMem} = \Bis{\sCell}^{\Bis{\sAddr}}
  \]
  
  For the variables, we let $v^\mB=v^\mA$ for sorts $\sAddr$,
  $\sElem$, $\sOrd$, $\sPath$ and $\sSet$. For $\sLevel$, we assign
  $l^\mB=l^\mA$, which is guaranteed to be within $0$ and $\K-1$.  For
  $\sCell$, let $c=(e,k,A,l)$ be an element of $\Ai{\sCell}$. The
  following function maps $c$ into an element of $\Bi{\sCell}$:
  \[  
  \alpha(e,k,A,l) = (e,k,A(0),\ldots,A(\K-1))
  \]
  Essentially, cells only record information of relevant levels, which
  are those levels for which there is a level variable; all upper
  levels are ignored. Every variable $v$ of sort $\sCell$ is
  interpreted as $v^\mB=\alpha(v^\mA)$.  Finally, a variable $v$ of
  sort $\sMem$ is interpreted as a function that maps an element $a$
  of $\Bi{\sAddr}$ into $\alpha(v^\mA(a))$, essentially mapping
  addresses to transformed cells. Finally, for all arrays $A$ in the
  formula $\varphi$, we assign
  $v^{\mB}_{\fArrayRd{A}{i}}=A^{\mA}(i)$.

  \paragraph*{\textbf{Checking the Model $\mB$}}
  We are ready to show, by case analysis on the literals of the original
  formula $\varphi$, that $\mB$ is indeed a model of $\trphi$. The
  following literals hold in $\mB$, directly from the choice of
  assignments in $\mB$ because the corresponding literals hold in
  $\mA$:
  \[ 
  \begin{array}{l@{\hspace{4em}}l@{\hspace{4em}}l}
  	e_1 \neq e_2 & a_1 \neq a_2 & l_1 \neq l_2 \\
	a = \fNull & c = \fError & c = \fRd(m, a) \\
	k_1 \neq k_2 & k_1 \pOrd k_2 & m_2 = \fUpd(m_1, a, c) \\
        & l_1 < l_2 & l=q\\
	s = \{a\} & s_1 = s_2 \cup s_3 & s_1 = s_2 \setminus s_3 \\
	p_1 \neq p_2 & p = [a] & p_1 = \fRev(p_2) \\
	s = \fPathToSet(p) & \pAppend(p_1, p_2, p_3) &
        \lnot \pAppend(p_1, p_2, p_3) \\
        & 
        &
	\pOrdList(m,p) \\
        &&
  \end{array}
  \]
  The remaining literals are:
  \begin{itemize}
  \item $c = \fMkcell(e,k,A,l)$: Clearly the data and key fields of
    $c^\mB$ and the translation of $\fMkcell^\mB(e,k,\ldots)$
    coincide.  Similarly, by the $\alpha$ map for elements of
    $\Bi{\sCell}$, the array entries coincide with the values of the
    fresh variables $v_{\fArrayRd{A}{i}}$. Hence, $c=\fMkcell(e,k,v_{\fArrayRd{A}{0}},\ldots,v_{\fArrayRd{A}{K-1}})$ holds in $\Bi$.
  \item $a = \fArrayRd{A}{l}$: our choice of
    $v^{\mB}_{\fArrayRd{A}{i}}$ makes
      \[ v^{\mB}_{\fArrayRd{A}{i}}=A^{\mA}(l^{\mB})=A^{\mA}(l^\mA)=a^\mA=a^\mB 
      \]
      so the clause generated from $a = \fArrayRd{A}{l}$ in $\trphi$ holds in $\mB$.
    \item $B = \fArrayUpd{A}{l}{a}$: In this case, for
      $j=l^\Ai=l^\Bi$,
      $v^{\mB}_{\fArrayRd{B}{j}}=B^\Ai(l^\Ai)=a^\mA=a^\mB$. Moreover,
      for all other indices $i$:
      \[ 
      v^{\mB}_{\fArrayRd{B}{i}}=B^\mA(i)=A^\Ai(i)=v^{\mB}_{\fArrayRd{A}{i}}
      \]
      so the clause $(\ref{eq:transBeqAl})$ generated from $B = \fArrayUpd{A}{l}{a}$ in
      $\trphi$ holds in $\mB$.
    \item $s = \fAddrToSet(m, a, l)$: it is easy to show by induction on the length
      of paths that, for all $l^\mA$:
      \begin{equation}\label{eq:reachAreachB}
        (m^\mA,a^\mA,b^\mA,l^\mA,p^\mA)\in\pReach^\mA\;\;\;\;\textit{iff}\;\;\;\;\; (m^\mB,a^\mB,b^\mB,l^\mB,p^\mB)\in\pReach^\mB
     \end{equation}
      It follows that $s^\mA = \fAddrToSet(m^\mA, a^\mA, l^\mA)$ implies
      $s^\mB = \fAddrToSet(m^\mB, a^\mB, l^\mB)$.
    \item $p = \fGetp(m, a_1, a_2, l)$: Fact $(\ref{eq:reachAreachB})$
      also implies immediately that if literal $p = \fGetp(m, a_1,
      a_2, l)$ holds in $\mA$ then $p = \fGetp(m, a_1, a_2, l)$
      holds in $\mB$.
    \item $\pSkiplist(m, s, a_1, a_2)$: Following
      $(\ref{eq:transSkiplist})$ the four disjuncts (1) the lowest
      level is ordered, (2) the region contains exactly all low
      level, (3) the centinel cell has null successors, and (4) each
      level is a subset of the lower level, hold in $\mB$, because
      they corresponding disjunct holds in $\mA$.
    \end{itemize}
    This shows that $\mB$ is a model of $\trphi$ and therefore
    $\trphi$ is satisfiable.  
    \qed
\end{proof}

\begin{lemma}
  Let $\varphi$ be a normalized set of $\TSL$ literals with no constants.
  If $\trphi$ is satisfiable, then $\varphi$ is also satisfiable.
  \label{lem:tslk2tsl}
\end{lemma}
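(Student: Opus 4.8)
The plan is to invert the construction of Lemma~\ref{lem:tsl2tslk}: from a model $\mB$ of $\trphi$ in $\TSLK(\K)$, where $\K=|\TVar{\sLevel}(\varphi)|$, I would obtain a $\TSL$ model $\mA$ of $\varphi$ by \emph{padding} every bounded array that occurs in $\mB$ (and every cell and heap built from such arrays) with $\fNull$ at all levels $\geq\K$. Concretely I keep $\Ais{\sAddr}=\Bis{\sAddr}$, $\Ais{\sElem}=\Bis{\sElem}$, $\Ais{\sOrd}=\Bis{\sOrd}$, $\Ais{\sPath}=\Bis{\sPath}$ and $\Ais{\sSet}=\Bis{\sSet}$, and take $\Ais{\sLevel}=\Nat$, $\Ais{\sArray}=\Ais{\sAddr}^{\Nat}$, $\Ais{\sCell}=\Ais{\sElem}\times\Ais{\sOrd}\times\Ais{\sArray}\times\Ais{\sLevel}$ and $\Ais{\sMem}=\Ais{\sCell}^{\Ais{\sAddr}}$. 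I then define a padding map $\gamma$: on arrays $\gamma_\sArray(A)(i)=A(i)$ for $i<\K$ and $\gamma_\sArray(A)(i)=\fNull$ for $i\geq\K$; on cells $\gamma_\sCell$ replaces the array field by $\gamma_\sArray$ of it and leaves the data, key and (bounded) max-level fields untouched; on heaps $\gamma_\sMem$ applies $\gamma_\sCell$ pointwise. Variables of sorts $\sAddr,\sElem,\sOrd,\sPath,\sSet$ keep their $\mB$-values, level variables keep their $\mB$-values (which lie in $[\K]\subseteq\Nat$), every array variable $A$ of $\varphi$ is interpreted by $A^\mA(i)=v^\mB_{\fArrayRd{A}{i}}$ for $i<\K$ and $\fNull$ for $i\geq\K$, and cell and memory variables are interpreted by $\gamma_\sCell$ and $\gamma_\sMem$ applied to their $\mB$-values.

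The second step is a case analysis over the normalized literals of $\varphi$, checking that each holds in $\mA$ whenever its translation holds in $\mB$. The literals not mentioning $\sArray$, $\sCell$, $\sMem$ or the reachability/skiplist vocabulary ($e_1\neq e_2$, $a_1\neq a_2$, $l_1\neq l_2$, $l_1<l_2$, $k_1\pOrd k_2$, $s=\{a\}$, the set operations, $p=[a]$, $p_1=\fRev(p_2)$, $s=\fPathToSet(p)$, $\pAppend$, $\lnot\pAppend$, $\pOrdList$, and the $\fNull$/$\fError$ literals) carry over verbatim, since $\mA$ and $\mB$ share those variable values and sort interpretations; the sanitation literals $\lnew=l+1$ survive because level values are copied. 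For $c=\fMkcell(e,k,A,l)$ I use that the translated clause equates the entries of $c^\mB$ below $\K$ to the fresh variables $v^\mB_{\fArrayRd{A}{i}}$, which are precisely the low levels of $A^\mA$, while $c^\mA.\fArr^\mA$ and $A^\mA$ both equal $\fNull$ above $\K-1$. Literals $c=\fRd(m,a)$ and $m_2=\fUpd(m_1,a,c)$ hold because $\gamma_\sMem$ commutes with read and single update. For $a=\fArrayRd{A}{l}$ the clause $\bigwedge_{i<\K}(l=i\Impl a=v_{\fArrayRd{A}{i}})$ together with $l^\mA=l^\mB<\K$ does it; for $B=\fArrayUpd{A}{l}{a}$, clause~\eqref{eq:transBeqAl} gives agreement with $\fArrayUpd{A^\mA}{l^\mA}{a^\mA}$ below $\K$, and padding gives agreement ($\fNull$ on both sides) above, again because $l^\mA<\K$ so the update never reaches the padded region --- no sanitation trick is needed in this direction. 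For $s=\fAddrToSet(m,a,l)$ and $p=\fGetp(m,a_1,a_2,l)$, an easy induction on path length, the mirror of~\eqref{eq:reachAreachB}, shows that $\pReach$ at the common level $l^\mA=l^\mB<\K$ agrees in $\mA$ and $\mB$, since padding does not touch any cell's pointer at that level.

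The step I expect to be the main obstacle is $\pSkiplist(m,s,a_1,a_2)$, whose $\TSL$ semantics ranges over \emph{all} levels up to the height of the structure while its translation~\eqref{eq:transSkiplist} constrains only levels $0,\ldots,\K-1$. My argument would be that padding renders every level $\geq\K$ degenerate: there the ``next'' pointer of every cell of $\mA$ is $\fNull$, so the reachability-based conditions of $\pSkiplist$ (null successors of $a_2$, and each level a sublist of the one below) are trivially met at those levels; combining this with the four conjuncts of~\eqref{eq:transSkiplist}, which hold in $\mB$ and hence at levels below $\K$ in $\mA$, and noting that the reconstructed structure has height $<\K$, gives $\pSkiplist^\mA(m^\mA,s^\mA,a_1^\mA,a_2^\mA)$. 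The care-demanding point is pinning down the height of the reconstructed skiplist so that the null-at-the-top requirements are exactly met; this is the ``mirror'' of Lemma~\ref{lem:topReduction}, on which I would lean. Once every literal is discharged, $\mA$ witnesses satisfiability of $\varphi$, which together with Lemma~\ref{lem:tsl2tslk} also establishes Lemma~\ref{lem:tslIFFtslk}.
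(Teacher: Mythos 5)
Your construction is essentially the paper's own proof: it builds $\mA$ from $\mB$ by padding arrays (and hence cells and heaps) with $\fNull$ at all levels $\geq\K$, interprets each array variable via the fresh variables $v^\mB_{\fArrayRd{A}{i}}$ below $\K$ and $\fNull$ above, and then discharges each normalized literal by case analysis, with a path-length induction for the $\pReach$-based literals and the degeneracy of the padded levels handling $\pSkiplist$. The only cosmetic difference is the cell's max-level field, which TSLK cells do not carry, so it cannot be ``left untouched''; the paper simply fixes it (to $\K$) for all reconstructed cells, which is the same padding idea you describe.
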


\begin{proof}
  We start from a $\TSLK$ model $\Bi$ of $\trphi$ and construct a
  model $\Ai$ of $\varphi$.

 \paragraph*{\textbf{Building a Model $\mA$}}
  We now proceed to show that $\varphi$ is satisfiable by building a
  model $\mA$. For the domains, we let:
  \[
  \Ais{\sAddr}=\Bis{\sAddr}  \;\;\;\;\;\;\;
  \Ais{\sElem}=\Bis{\sElem}  \;\;\;\;\;\;\;
  \Ais{\sOrd} =\Bis{\sOrd}    \;\;\;\;\;\;\;
  \Ais{\sPath}=\Bis{\sPath}  \;\;\;\;\;\;\;
  \Ais{\sSet} =\Bis{\sSet}.
  \]
  Also, $\Ais{\sLevel}$ is the naturals with order, and
  \[ 
  \Ais{\sCell}=\Ais{\sElem}\times\Ais{\sOrd}\times\Ais{\sArray}\times\Ais{\sLevel}\hspace{4em}
  \Ais{\sMem} = \Ais{\sCell}^{\Ais{\sAddr}}.
  \]
  
  For the variables, we let $v^\mA=v^\mB$ for sorts $\sAddr$,
  $\sElem$, $\sOrd$, $\sPath$ and $\sSet$. For $\sLevel$, we also
  assign $l^\mA=l^\mB$.  For $\sCell$, let
  $c=(e,k,a_0,\ldots,a_{\K-1})$ be an element of $\Bis{\sCell}$. Then
  the following function $\beta$ maps $c$ into an element of
  $\Ais{\sCell}$:
  \begin{equation}  \label{eq:MkCellBtoA}
    \beta(c:(e,k,a_0,\ldots,a_{\K-1})) = (e,k,A,l)\;\;\;\;\;\; \text{where}
  \end{equation}
  \begin{align*}
  l&= \K\\
  A(i)&=\begin{cases}
    a_i   &\text{if $0\leq i < l$}\\
    \fNull&\text{if $i\geq l$}
    \end{cases}
  \end{align*}
  Every variable $v$ of sort $\sCell$ is interpreted as
  $v^\mA=\beta(v^\mB)$.  Finally, a variable $v$ of sort $\sMem$ is
  interpreted as a function that maps an element $a$ of $\Ais{\sAddr}$
  into $\beta(v^\mB(a))$, mapping addresses to transformed cells.

  Finally, for all arrays variables $A$ in the original formula
  $\varphi$, we assign:
  \begin{equation}\label{eq:ArrayBtoA}
  A^\Ai(i)=\begin{cases} 
    v^{\mB}_{\fArrayRd{A}{i}} & \text{if $i<\K$} \\
    \fNull & \text{otherwise}
    \end{cases}
  \end{equation}

    \paragraph*{\textbf{Checking the Model $\mA$}}
    We are ready to show, by cases on the literals of the original
    formula $\varphi$ that $\mA$ is indeed a model of $\varphi$. The
    following literals hold in $\mA$ because the corresponding
    literals hold in $\mB$:
  \[ 
  \begin{array}{l@{\hspace{4em}}l@{\hspace{4em}}l}
  	e_1 \neq e_2 & a_1 \neq a_2 & l_1 \neq l_2 \\
	a = \fNull & c = \fError & c = \fRd(m, a) \\
	k_1 \neq k_2 & k_1 \pOrd k_2 & m_2 = \fUpd(m_1, a, c) \\
        & l_1 < l_2 & l=q\\
	s = \{a\} & s_1 = s_2 \cup s_3 & s_1 = s_2 \setminus s_3 \\
	p_1 \neq p_2 & p = [a] & p_1 = \fRev(p_2) \\
	s = \fPathToSet(p) & \pAppend(p_1, p_2, p_3) &
        \lnot \pAppend(p_1, p_2, p_3) \\
        & 
        &
	\pOrdList(m,p) \\
        &&
  \end{array}
  \]
  The remaining literals are:
  \begin{itemize}
  \item $c = \fMkcell(e,k,A,l)$: Clearly the data and key fields of
    $c^\mA$ and the translation of $\fMkcell^\Ai(e,k,\ldots)$ given by
    $(\ref{eq:MkCellBtoA})$ coincide. By the choice of array
    variables $A^\mA(i)=v^\mB_{A[i]}=a_{i}$, so
    $A$ and the array part of $c$ coincide at all positions. For
    $l^\mA$ we choose $\K$ for all cells.
  \item $a = \fArrayRd{A}{l}$: holds since
    \[ a^\mA=a^\mB=v^\mB_{A[l^\mB]}=A^\mA(l^\mB)=A^\mA(l^\mA).\]
  \item $B = \fArrayUpd{A}{l}{a}$: We have that the translation of $B
    = \fArrayUpd{A}{l}{a}$ for $\trphi$ given by
    $(\ref{eq:transBeqAl})$ holds in $\mB$. Consider an arbitrary
    level $m<\K$. If $m=l^\Bi=l^\Ai$ then
    $a=v_{\fArrayRd{B}{m}}=B^\Ai(m)$. If $m\neq l^\Bi$ then
    $v_{\fArrayRd{B}{m}}=v_{\fArrayRd{A}{m}}$ and hence
    $B^\Ai(m)=v_{\fArrayRd{B}{m}}=v_{\fArrayRd{A}{m}}=A^\Ai(m)$.
  \item $A=B$: the clause $(\ref{eq:ArrayBtoA})$ generated from $A=B$
    in $\trphi$ holds in $\mB$, by assumption. For an arbitrary $j$
    from $[\K]$:
      \[ A^\mA(j) = v^{\mB}_{\fArrayRd{A}{j}} =
      v^{\mB}_{\fArrayRd{B}{j}} = B^\mA(j) \]
      Moreover, for $j\geq \K$, then $A^\Ai(j)=\fNull=B^\Ai(j)$ and
      consequently $A^\Ai=B^\Ai$ as desired.
    \item $s = \fAddrToSet(m, a, l)$: it is easy to show by induction on the length
      of paths that, for all $l^\mA$:
      \begin{equation} \label{eq:reachAreachBTwo}
      (m^\mA,a^\mA,b^\mA,l^\mA,p^\mA)\in\pReach^\mA\;\;\;\;\textit{iff}\;\;\;\;\; (m^\mB,a^\mB,b^\mB,l^\mB,p^\mB)\in\pReach^\mB
      \end{equation}
      It follows that $s^\mA = \fAddrToSet(m^\mA, a^\mA, l^\mA)$ implies
      $s^\mA = \fAddrToSet(m^\mA, a^\mA, l^\mA)$.
    \item $p = \fGetp(m, a_1, a_2, l)$: Fact
      (\ref{eq:reachAreachBTwo}) also implies immediately that if
      literal $p = \fGetp(m, a_1, a_2, l)$ holds in $\mA$ then $p =
      \fGetp(m, a_1, a_2, \map(l))$ holds in $\mB$.
    \item $\pSkiplist(m, s, a_1, a_2)$: Following
      $(\ref{eq:transSkiplist})$ the four disjuncts (1) the lowest
      level is ordered, (2) the region contains exactly all low
      addresses in the lowest level, (3) the centinel cell has null
      successors, and (4) each level is a subset of the lower level,
      hold in $\mA$, because they corresponding disjunct holds in
      $\mB$.
    \end{itemize}
    This shows that $\Ai$ is a model of $\varphi$ and therefore
    $\varphi$ is satisfiable.  
    \qed
\end{proof}






\end{document}